\newcommand{\butf}{\textsc{ButF}\xspace}
\newcommand{\futhark}{\textsc{Futhark}\xspace}
\newcommand{\pic}{$\pi$-calculus\xspace}
\newcommand{\apic}{applied \pic}
\newcommand{\epic}{E$\pi$\xspace}
\newcommand{\mra}{\rightarrow}
\newcommand{\pizero}{\mathbf{0}}
\newcommand{\pisend}[2]{\overline{#1}\langle#2\rangle}
\newcommand{\pibroad}[2]{\overline{#1}{:}\langle#2\rangle}
\newcommand{\pirecv}[2]{#1(#2)}
\newcommand{\translate}[1]{\llbracket{#1}\rrbracket}
\newcommand\pitrans[1]{\translate{#1}}
\newcommand{\transtopi}[1]{\translate{#1}}
\newcommand{\ba}{\mathbin{\dot\approx_a}}
\renewcommand\mit[1]{\mathit{#1}}
\newcommand{\ite}[4]{\ensuremath{[#1 \bowtie #2] \, #3, \, #4}}
\newtheorem{theorem}{Theorem}
\newtheorem{definition}{Definition}
\newtheorem{lemma}{Lemma}
\newcommand\piifthenelse[3]{\left[#1\right]\:#2,#3}
\newcommand\piifthen[2]{\left[#1\right]\:#2}
\newcommand\mraw{\xrightarrow{\circ}}
\newcommand\pisubst[2]{\left\{\sfrac{#2}{#1}\right\}}
\newcommand{\opcor}{\gtrless_{ok}}
\newacronym{GPGPU}{GPGPU}{general purpose \gls{GPU}}
\newacronym{GPU}{GPU}{graphics processing unit}
\newacronym{SOAC}{SOAC}{second-order array combinator}
\newacronym{LMAD}{LMAD}{linear-memory access descriptor}
\newacronym{BFS}{BFS}{breadth-first search}
\newacronym{CPU}{CPU}{central processing unit}
\newacronym{IR}{IR}{intermediate representation}
\newacronym{SIMT}{SIMT}{single instruction multiple thread}
\newacronym{SIMD}{SIMD}{single instruction multiple data}
\begin{document}
\title{Functional Array Programming in an Extended Pi-Calculus}

\author{Hans Hüttel \institute{Department of Computer Science,
    University of Copenhagen, Denmark (\texttt{hans.huttel@di.ku.dk})} \and Lars Jensen\institute{Department of Computer
    Science, Aalborg University, Denmark (\texttt{larsdjand@gmail.com})} \and Chris Oliver
  Paulsen\institute{Department of Computer Science, Aalborg
    University, Denmark (\texttt{chris@coppm.xyz})} \and
  Julian Teule \institute{Department of Computer Science, Aalborg University, Denmark (\texttt{julian@jtle.dk}})}

\newcommand{\size}{\textsf{size}\xspace}
\newcommand{\reduce}{\textsf{reduce}\xspace}
\newcommand{\iot}{\textsf{iota}\xspace}
\newcommand{\map}{\textsf{map}\xspace}
\newcommand{\runa}[1]{\textsc{(#1)\xspace}}
\newcommand{\tup}{\texttt{tup}}

\def\authorrunning{H. Hüttel, L. Jensen, C.O. Paulsen \& J. Teule}
\def\titlerunning{Functional Array Programming in an Extended Pi-Calculus}
\maketitle

\begin{abstract}
  We study the data-parallel language \butf, inspired by the \futhark
  language for array programming. We give a translation of \butf into
  a version of the \pic with broadcasting and labeled names. The
  translation is both complete and sound. Moreover, we propose a
  cost model by annotating translated \butf processes.  This is
  used for a complexity analysis of the translation.
\end{abstract}

\section{Introduction}

The \futhark programming language is a functional language whose goal
is to abstract parallel array operations by means of utilizing
\emph{second order array combinators}, such as \map and
\reduce \cite{FutharkPhd}. The \futhark compiler then
efficiently translates code into optimized code for the targeted
hardware.

Parallel hardware, such as \glspl{GPU}, does not support arbitrary
nesting of parallel operations, nor arbitrarily large problem sizes,
and the \futhark compiler therefore produces a program for which the
outermost levels of nested operations of a program are executed in
parallel.

The \gls{GPU} programs produced by the \futhark compiler are therefore
limited by the physical constraints of the hardware in question, and
it would thus be interesting to analyze \futhark programs in the
setting of an underlying parallel language without these limitations.

It is known that there exist sound translations of the
$\lambda$-calculus and different reduction strategies into the simple
$\pi$-calculus
\cite{DBLP:books/daglib/0098267,DBLP:books/daglib/0004377}. Milner was
the first to provide such a translation
\cite{pi_functions_as_processes} and Sangiorgi extended his work
\cite{DBLP:conf/mfps/Sangiorgi93,DBLP:journals/iandc/Sangiorgi94,DBLP:journals/mscs/Sangiorgi99}. These
encodings identify the essence of how to implement a functional
programming language on a parallel architecture using references in
the form of name-passing and the ability to express arbitrary levels
of nested concurrency and parallelism.

In this paper we use this work as the inspiration for a translation of
a functional array programming language which is a subset of \futhark
into an extended $\pi$-calculus, $E\pi$. In $E\pi$ we extend the
setting to one containing structured data
\cite{pi_mobile_values_secure_communication,DBLP:conf/lics/BengtsonJPV09}
and broadcasting, as these are central to the protocol used by
\futhark.

Our focus is on how to encode the array structure and a subset of
second-order array operators from \futhark into $E\pi$. For the proof
of operational correspondence we use a coinductive approach which
lends itself well to expressing the correctness of our encoding.  Our
approach is inspired by that of Amadio et
al. \cite{DBLP:conf/fct/AmadioLT95} in that we distinguish between the 
``important'' and ``administrative'' computation steps. This also
allows us to compare the cost of the translation to that of \futhark
constructs.


\section{A language for array programming} 

First we introduce \butf and the process calculus \epic that
will be the target language of our translation.

\subsection{Basic Untyped \futhark}

Basic Untyped \futhark (\butf) deals only with functional array
computation and omits the module system of \futhark. \butf is thus a simple
$\lambda-$calculus with arrays, tuples, and binary
functions.

\subsubsection{Expressions in \butf}

The formation rules of \butf expressions are shown below.
\begin{equation}
  \begin{aligned}
    e &::= \ b\mid x\mid [e_1,\dots,e_n]\mid e_1[e_2]\mid\lambda
  x.e_1\mid e_1\ e_2\mid  (e_1,\dots,e_n) \mid \texttt{if}\ e_1\ \texttt{then}\ e_2\ \texttt{else}\ e_3\\
    b&::= n\mid \texttt{map}\mid\texttt{iota}\mid\texttt{size} \mid\odot
\end{aligned}
\label{eq:syntax}
\end{equation}
\butf makes use of prefix application $e_1\ e_2$. Constants are
denoted as $b$, and are integer constants, arithmetic operations
$\odot$ and the array operations described in
\cref{sec:arrayop}. Arrays are denoted by $[e_1,\dots,e_n]$ and tuples
are denoted by $(e_1,\dots,e_n)$. The expression $e_1[e_2]$ will
evaluate to the place in the array $e_1$ whose index is the value of
$e_2$. To express a unary tuple, we use the notation $(x,)$, while a
empty tuple is denoted as $()$.

\butf is a call-by-value language whose values $v \in V$ are
constants, function symbols and arrays and tuples that contain values
only.
\[ v ::= b \mid [v_1,\ldots v_n] \mid (v_1,\ldots,v_n) \]
The semantics of \butf is given by the reduction relation $\mra$, and
reductions are of the form $e \mra
e'$. \Cref{eq:application_semantics} shows the semantics of
application is beta-reduction. 
\begin{align}\label{eq:application_semantics}
    \frac{}{(\lambda x.e)\ v\mra e\{x\mapsto v\}}
\end{align}
Arrays contain elements that can be arbitrary expressions.
\Cref{eq:array_semantics} shows how each subexpression in an array can
take a reduction step. Fully evaluated expressions can be indexes with
the index operator.
\begin{align}\label{eq:array_semantics}
      \frac{e_i\mra e_i' \quad 1 \leq i \leq
      n}{[e_1,\dots,e_i,\dots,e_n]\mra[e_1,\dots,e_i',\dots,e_n]}
    & &
      \frac{0\leq i\leq n-1}{[v_1,\dots,v_n][i]\mra v_{i+1}}
\end{align}

Lastly, we have the conditional structure that allows branching depending on the result of $e_1$.
\begin{equation}
    \begin{gathered}
        \begin{prooftree}
            \hypo{v\neq 0}
            \infer1{\texttt{if}\ \mit{v}\ \texttt{then}\ e_2\ \texttt{else}\ e_3\rightarrow e_2}
        \end{prooftree}\qquad
        \begin{prooftree}
            \hypo{v = 0}
            \infer1{\texttt{if}\ \mit{v}\ \texttt{then}\ e_2\ \texttt{else}\ e_3\rightarrow e_3}
        \end{prooftree}
    \end{gathered}
\end{equation}

\subsubsection{Array Operations}\label{sec:arrayop}

\butf uses the array operations \size, \iot and \map.  These have been
chosen since they can be used to define other common array operators
such as \textsf{concat}, \textsf{reduce}, and \textsf{scan}
\cite{p10}. This allows us to simplify the translation and the proof
of its correctness.

The intended behaviour of the function constants is as follows. \size
receives a handle of an array and returns its element count and \iot
creates an array of the size of its parameter with values equal to the
values' index. The \map function allows for applying a function to
each element in an array. 


The reduction rules for the function constants are shown below. Notice
that \map is uncurried -- it cannot be partially applied. This
eliminates the translation case of a partially applied map function.
\begin{align}\label{eq:soac_semantics}
      \map\ ((\lambda x.e),[v_1,\dots,v_n])\mra[e\{x\mapsto v_1\},\dots,e\{x\mapsto v_n\}]
\\
     \size\ [v_1,\dots,v_n]\mra n
 \qquad
    \iot\ n\mra[0,1,2,\dots,n-1]
\end{align}

\subsection{Extended Pi-Calculus}
The language used as the target for the translation is the Extended
$\pi$-calculus (\epic), presented in previous work \cite{p10}, and is
based on the \apic presented by Abadi, Blanchet, and Fournet
\cite{DBLP:journals/jacm/AbadiBF18}.  This calculus is extended with
broadcast communication as presented by Hüttel and Pratas
\cite{DBLP:journals/corr/HuttelP16} as well as simple first order
composite names based on \cite{composite_names}.
\vspace{-2mm}
\subsubsection{Processes in \epic}

Processes are given by the formation rules below.
\begin{equation} \label{fig:Process_pi_syntax}
  \begin{aligned}
    P&::= \pizero \mid P|Q \mid {!}P\mid\nu a.P \mid A.P\mid\bullet P
    \mid \ite{M}{N}{P}{Q} & \qquad &
    A::= \pisend c{\vec T}\mid \pirecv c{\Vec x}\mid\pibroad c{\vec T}\\
    c&::= a\mid x\mid a\cdot I\mid x\cdot I &&
    I::= n\mid x\mid\texttt{all}\mid\texttt{tup}\mid\texttt{len}\\
    T&::= n \mid a\mid x\mid T\odot T
  \end{aligned}
\end{equation}
$T$ ranges over terms that can be sent on channels. These may be a
number ($n$), a channel name ($a$), or a variable ($x$).  A term may
also be a binary operation on two terms ($T \odot T$). These
operations are as in \butf, except that one cannot use them on
names. We let $u$ range over the set of variables and names.

Processes $P$ can be the empty process $\pizero$ which cannot reduce
further, parallel composition ($P \mid Q$) consisting of two processes
in parallel, replication (${!}{P}$) which constructs an unbounded
number of process $P$ in parallel, and declaration of new names
($\nu u.P$), which restricts $u$ to the scope of $P$.  A process
\ite{M}{N}{P}{Q} is a conditional process where
$\bowtie \, \in \{<, >, =,\neq\}$. If $M \bowtie N$, it proceeds as $P$ and
else as $Q$. Actions $A$ are output $\pisend c{\vec T}$ and input
$\pirecv c{\Vec{x}}$; in $\pirecv c{\Vec{x}}.P$, the variables
$\Vec{x}$ are bound in $P$. In $\nu a.P$, $a$ is bound in $P$. We let
$\mit{fn}(P)$ and $\mit{fv}(P)$ denote the sets of free names and free
variables in $P$. The process $\bullet P$ denotes that $P$ begins with
an important computation step; this is explained in Section
\ref{subsec:semantics} and is used in analyzing the complexity of our encoding.

Broadcasting in \epic is denoted as $\pibroad c{\vec T}$. It can send
a vector of terms $\vec{T}$ over a channel $c$ to multiple processes
in a single reduction, atomically. A channel name $c$ can be a name
($a$) is a composite name consisting of a name followed by an
identifier $I$ that can be either a number or a label. These labels
are used to distinguish between several different translation
constructs. In particular, in the encoding, labels describe if a
reduction involves an entire array (\texttt{all}), the reduction of a
tuple (\texttt{tup}) or the computation of the length of an array
(\texttt{len}).

\vspace{-2mm}

\subsubsection{Semantics} \label{subsec:semantics}

The structure of the semantics for \epic is similar to that of the
\pic, using a structural congruence relation that identifies process
expression with the same structure and a reduction relation.

\begin{figure}[H]
  \begin{center}
\begin{tabular}{llll}
    \runa{Rename}& $P \equiv P'$ by $\alpha$-conversion &
   \runa{Replicate}   & ${!}P \equiv P\mid{!} P$\\[3mm]
  \runa{Par-$\pizero$}  & $P\mid\pizero\equiv P$ &
                                                    \runa{New-$\pizero$}
                             & $\nu n.\pizero \equiv\pizero$\\[3mm]
    \runa{Par-A}  & $P\mid(Q\mid R) \equiv(P\mid Q)\mid R$ &
                                                             \runa{New-A}
                             & $\nu u.\nu v.P \equiv\nu v.\nu u.P$ \\[3mm]
    \runa{Par-B}  & $P\mid Q \equiv Q\mid P$ & \runa{New-B} &
                                                              $P\mid\nu
                                                              u.Q
                                                              \equiv\nu
                                                              u.(P\mid
                                                              Q)$ \\
                                       &&& when $u\not\in\mit{fv}(P)\cup\mit{fn}(P)$\\
\end{tabular}
\end{center}
    \caption{The structural congruence rules for the extended $\pi$ calculus }
\label{fig:Pi_struct_rules}
\end{figure}

The congruence rules shown in \cref{fig:Pi_struct_rules} are common for most $\pi$-calculi, and for a more detailed explanation see previous work \cite{pi_functions_as_processes}. 

\begin{figure}
\begin{center}
\begin{tabular}{llll}
\runa{Comm}  & $\pisend {C}v.P\mid\pirecv
               {C}x.Q\xrightarrow\tau P\mid Q\pisubst xv $ & \runa{Broad}
  & $\pibroad{C}{v}.Q \mid \pirecv{C}{x_1}.P_1\mid\dots\mid\pirecv
               c{x_n}.P_n $\\ &&& $\xrightarrow{:c}Q\mid P_1\pisubst {x_1}v\mid\dots\mid P_n\pisubst {x_n}v$ \\[3mm]
\runa{Par} & $\begin{prooftree}
            \hypo{P\xrightarrow\tau P'}
            \infer1{P\mid Q\xrightarrow\tau P'\mid Q}
        \end{prooftree}$ & \runa{B-Par} & $\begin{prooftree}
            \hypo{P\xrightarrow{:c} P'}
            \hypo{Q\not\downarrow_c}
            \infer2{P\mid Q\xrightarrow{:c} P'\mid Q}
          \end{prooftree}$  \\[4mm]
\runa{Res-1} & $\begin{prooftree}
            \hypo{P\xrightarrow{:c}P'}
            \hypo{c\not\in\{u, u\cdot I\}}
            \infer2{\nu u.P\xrightarrow{:c}\nu u.P'}
        \end{prooftree}$ & \runa{Res-2} & $\begin{prooftree}
            \hypo{P\xrightarrow{:c} P'}
            \hypo{c\in\{u, u\cdot I\}}
            \infer2{\nu u.P\xrightarrow{\tau}\nu u.P'}
          \end{prooftree}$ \\[5mm]
\runa{Struct} & $\begin{prooftree}
            \hypo{P\xrightarrow q P'}
            \infer1{Q\xrightarrow q Q'}
        \end{prooftree} \quad\text{if $P\equiv Q$ and $P'\equiv Q'$}$
                                                       & \runa{Then} &
                                                                       $\ite{M}{N}{P}{Q}\xrightarrow\tau
                                                                       P\quad\text{if
                                                                       $M\bowtie
                                                                       N$}$
  \\[4mm]
  \runa{Else} & $\ite{M}{N}{P}{Q}\xrightarrow\tau Q\quad\text{if $M\not\bowtie N$}$
\end{tabular}
\end{center}
    \caption{The reduction rules of extended processes in \epic. Here, $q$ is either $\tau$ or some ${:}b$.}
    \label{fig:pi_reduction_rules}
\end{figure}

The transition labels $\tau$ and ${:c}$ in \cref{fig:pi_reduction_rules} ensure that all parallel receivers of a broadcast are used in the broadcast.
A reduction arrow without a label, $\mra$, is used to denote an arbitrary reduction.

\begin{figure}
    \begin{equation*}
    \begin{gathered}
        \runa{Adm}\quad\begin{prooftree}
            \hypo{P\xrightarrow\tau P'}
            \infer1{P\mraw P'}
        \end{prooftree}\qquad
        \runa{NonAdm}\quad\begin{prooftree}
            \hypo{P \mra P'}
            \infer1{\bullet P \xrightarrow\bullet P'}
        \end{prooftree} \qquad
        \runa{Both}\quad\begin{prooftree}
            \hypo{P\xrightarrow s P'}
            \hypo{s\in\{\bullet,\circ\}}
            \infer2{P\mra P'}
        \end{prooftree}
        \end{gathered}
    \end{equation*}
    \caption{Labeled semantics for important ($\xrightarrow\bullet$) and administrative reductions ($\xrightarrow\circ$) in \epic.}
\label{fig:pi_reduction_rules_labeled}
\end{figure}
The semantics shown in \cref{fig:pi_reduction_rules_labeled} are used to
distinguish between important and administrative reductions. This will
be used in the translation to distinguish transitions which emulate a
\butf reduction, and transitions which facilitate the translation.

\subsubsection{Weak Bisimilarity}

Our notion of semantic equivalence is called weak administrative
barbed bisimilarity as is a form of barbed congruence
\cite{pi_a_tutorial}.  To define it, we use an observability predicate
$\downarrow_\alpha$ where $\alpha$ is $a$ or $\overline{a}$. If
$\pisend\alpha b.P \rightarrow P$ then $P \downarrow_\alpha$. The
definition (which involves broadcast) follows the structure of that of
\cite{DBLP:books/daglib/0098267}. The arrows $\xRightarrow\circ$ and
$\xRightarrow\bullet$ denote multiple transitions as follows.
\begin{definition}
    We define $\xRightarrow{s}$ as follows for the label $s\in\{\bullet,\circ\}$.
    \begin{equation*}
        \xRightarrow{s}\ =\left\{\begin{matrix}
            s=\circ&\xrightarrow{\circ}^\ast\\
            s=\bullet&\xrightarrow{\circ}^\ast\xrightarrow{\bullet}\\
        \end{matrix}\right.
    \end{equation*}
\end{definition}

In weak administrative barbed bisimilarity, important reductions in the
one process must be matched by important reductions in the other
process.

\begin{definition}[Weak Administrative Barbed
  Bisimulation]\label{def:wabb}
  A symmetric relation $R$ over processes is called a \emph{weak
    administrative barbed bisimulation} (wabb) if whenever
  $(P,Q) \in R$, the following holds
\begin{enumerate}
    \item If $P\xrightarrow{\bullet} P'$ then there exists a $Q'$ such that $Q \xRightarrow{\bullet} Q'$ and $(P',Q') \in R$,
    \item If $P\mraw P'$ then $Q\xRightarrow{\circ} Q'$ and $(P',Q')\in R$,
    \item For all contexts $C$, $(C[P],C[Q]) \in R$,
    \item For all prefixes $\alpha$, if $P\downarrow_\alpha$ then $Q \xRightarrow{\circ}\downarrow_\alpha$.
    \end{enumerate}
    
    We write $P\ba Q$ if there exists
    a weak administrative barbed bisimulation $R$ such that
    $(P,Q)\in R$.
\label{def:weak_barbed_bisimulation}
\end{definition}

\section{Translating \butf to \epic}

The translation from \butf into the extended \pic is very similar to
the approach of Robin Milner \cite{pi_functions_as_processes}.  We use
the same notation of $\translate{e}_o$ for the translation of the
\butf expression $e$ into a process emitting the representation of the
its value on the channel $o$.  Our translation differs in that \butf
uses not numbers but also arrays and the accompanying
operators as values.

\subsection{Translating the functional fragment}
First, we define the translation of the part of \butf\ that
corresponds to an applied $\lambda$-calculus -- numbers, functions,
and application, shown in \cref{eq:translate_normal_stuff}. 
Numbers and variables are themselves already evaluated, and they are thus sent directly on the out channel.
With abstractions, we introduce a function channel $f$, which represents that abstraction.
A replicated process is listening on $f$, waiting for other processes to call it.
An application consists of two subexpressions that must be evaluated before the function channel and value can be extracted on the two inner $o$ channels.

The translation has been annotated with $\bullet$ to ensure that
transitions in \butf are matched by a single bullet. This can be seen in
application, $\pitrans{e_1\ e_2}_o$, which requires a single
$\xrightarrow\bullet$ before the function is called.

\begin{figure}
  \begin{align*}
    \translate x_o &= \pisend ox\\
    \translate n_o &= \pisend on\\
    \translate{\texttt{if}\ e_1\ \texttt{then}\ e_2\ \texttt{else}\ e_3}_{o} &=\\
    \nu o_1.(\translate{e_1}_{o_1}&\mid\pirecv{o_1}{v}.\bullet\piifthenelse{v \neq 0}{\translate{e_2}_{o}}{\translate{e_3}_{o}})\\
    \translate{\lambda x.e}_o &=\nu f.(\pisend of\mid{!}\pirecv f{x,r}.\translate{e}_r)\\
    \translate{e_1\ e_2}_o &=\\
    \nu o_1.\nu o_2.(\translate{e_1}_{o_1}&\mid\translate{e_2}_{o_2}
    \mid\pirecv{o_1}f.\pirecv{o_2}v.\bullet\pisend f{v, o})\\
  \end{align*}
  \vspace{-14mm}
  \caption{The translation for basic expressions.}\label{eq:translate_normal_stuff}
\end{figure}

\subsection{Tuples}
Tuples are translated by evaluating all subexpressions in parallel
and waiting for them all to return on their out channels.  These
results are then all repeatedly sent on the $h$ channel.  Users of the
tuple can read the handle channel to get access to all the values.

Therefore tuple elements are sent
on $h\cdot \tup$ to ensure that the tuple can not be used in places that
expect arrays. By composing with the label $\tup$, the array can only be
accessed with this label and not the array labels $\texttt{all}$ and
$\texttt{len}$.
  \begin{align*}
    \translate{(e_1,\dots,e_n)}_o =\nu{o_1}.\dots.\nu{o_n}.(\translate{e_1}_{o_1}\mid\dots\mid\translate{e_n}_{o_n}\mid
                                   \pirecv{o_1}{v_1}.\dots.\pirecv{o_n}{v_n}.\nu h.({!}\pisend{h\cdot\tup}{v_1,\dots,v_n}\mid\pisend
                                 oh))
  \end{align*}

\subsection{Representing arrays}

This section will cover how arrays can be represented in \epic, and how
this is used to translate \butf arrays.
This approach represents each array element with a independent cell,
which users communicate with. Here the extensions in \epic are very
useful, because they allow addressing individual array cells, or all at
once.

\subsubsection{Arrays}

We have decided to represent arrays as a replicated process listening
on some handle, much like how functions are represented in the
\pic. An array element is described by a cell process that listens on
a \textit{broadcast} for a request for all elements and listens on the composed name $\mit{handle}\cdot index$ for a request for a specific element.
\[  
    \mit{Cell}(\mit{handle},\mit{index},\mit{value}) = {!}\pirecv
                                                       {\mit{handle}\cdot\texttt{all}}r.\pisend
                                                       r{\mit{index,
                                                       value}}\mid{!}\pisend{\mit{handle}\cdot\mit{index}}{\mit{index},
                                                       \mit{value}}
                                                       \]
An array is a parallel composition of cells together with a single
replicated sender that provides users of the array with its
length. This is accessed via $h\cdot\texttt{len}$.  Notice how the
different composed labels and numbers, direct messages towards
different listeners in the array.
\begin{align*}
\translate{[e_1,\dots,e_n]}_o&=\nu o_1. \dots. \nu o_n.\nu
    h.(\\
    &\qquad\prod^{n}_{i=1} \translate{e_i}_{o_i}\mid
                                  \pirecv{o_1}{v_1}.\dots.\pirecv{o_n}{v_n}.(\\
    &\qquad\qquad\prod^{n}_{i=1}\mathit{Cell}(h,i-1,v_i)\mid{!}\pisend {h\cdot\texttt{len}}{n}\mid\pisend o{h}))
  \end{align*}
Also, notice how all subexpressions must return a value on their out
channels, before the translation creates the array and returns its
handle.

Indexing is translated similarly to application, however here we compose
the array handle $h$ of the first expression with the index of the
second expression to request the result. 
The check $[i\geq 0]$ is added to make it clear, that the program terminates if an attempt is made to index on a non-positive number.   
\begin{align*}
  \translate{e_1[e_2]}_o=  \nu o_1.\nu
  o_2.(\translate{e_1}_{o_1}\mid\translate{e_2}_{o_2}\mid \pirecv{o_1}h.\pirecv{o_2}i.\bullet[i\geq0]\pirecv{h\cdot i}{i,v}.\pisend ov,\pizero)  
\end{align*}

\subsubsection{Array Operators}

The translation of the \size operator is simple, as the size of an array is sent on the handle channel by the array.
  \begin{align*}
    \translate{\size\ e_1}_o&=\nu o_1.(\translate{e_1}_{o_1}\mid\pirecv{o_1}h.\pirecv {h\cdot\texttt{len}}{n}.\pisend on)\\
  \end{align*}
  \vspace{-1cm}
%
%



In the translation of the  \iot function below, a process
\textit{Repeat} is created to send numbers $0$ to $n-1$ on the return
channel $r$ (in reverse, but that is not important). Once all numbers
are sent it sends an empty message on $d$ to signal this. \iot then
creates an array in much the same way as usual, but by using the
\textit{Repeat} process instead. Notice how we wait for the done
signal by \textit{Repeat}, before we return the result on $o$, thus
ensuring the call-by-value semantics of \butf.
  \begin{align*}
    \mit{Repeat}(s, r, d)&=\\
    \nu{c}.({!}\pirecv{c}{n}.&[n\geq0] (\pisend{r}{n-1,n-1} \mid
    \pisend{c}{n-1}), \pisend{d}{} \mid \pisend{c}{s})\\
    \translate{\iot\ e_1}_o&=\nu o_1.\nu r.\nu h.(\translate{e_1}_{o_1}\mid\\
                                    &\phantom{=}\qquad\pirecv{o_1}n.\mit{Repeat}(n,r,d)\mid{!}\pirecv r{i,v}.\mit{Cell}(h,i,v)\mid\\
                                    &\phantom{=}\qquad\pirecv d{}.({!}\pisend {h\cdot\texttt{len}} {n})\mid\pisend oh)
  \end{align*}
  A translation of \map must extract the array values from the input
  array and then apply some given function to all these values,
  before they are added back to a new array. A function and the
  \textit{arr} handle are extracted from the input tuple.  The channel
  \textit{vals} is set up such that all values on the array are sent
  on it, followed by a replicated read on all the values.  Each
  element of the output array is initialized after receiving a signal
  on the count channel. This ensures that the done signal is only
  communicated after each array \textit{Cell} has been initialized.
  Once the \textit{done} signal has been communicated, the output of
  the new array handle can be sent on $o$.  This ensures the call by
  value nature of \butf.  Finally, to ensure that \textit{func} is a
  function handle, we invoke it without ever reading the
  result. Otherwise the translation would allow a non-function value
  when the array is empty.

  \begin{align*}
    &\translate{\map\ e_1}_o=\nu{o_1}.\nu h'.( \translate{e_1}_{o_1}\mid\pirecv{o_1}{\mit{args}}.\\
    &\quad\pirecv{\mit{args}\cdot
    \texttt{tup}}{\mit{func},\mit{h}}.\pirecv{\mit{h}\cdot\texttt{len}}{n}.\nu{\mit{vals}}.\pibroad {h\cdot\texttt{all}}{\mit{vals}}.\\
        &\quad\nu{\mit{count}}.(\\
       &\qquad\mit{Repeat}(n,\mit{count},\mit{done})\mid\\
       &\qquad{!}\pirecv{\mit{vals}}{\mit{index},\mit{value}}.\nu r.\pisend{\mit{func}}{\mit{value},r}.\\
       &\qquad\quad\pirecv rv.\pirecv{\mit{count}}{\_, \_}.\mit{Cell}(h',\mit{index},v)\mid\\
       &\qquad\nu o'.\pisend{\mit{func}}{0,
         o'}.\bullet\pirecv{\mit{done}}{}.\pisend
         o{h'}\mid{!}\pisend{h'\cdot\texttt{len}}{n} ))
  \end{align*}

\section{Correctness Criteria}

To be able to analyze the complexity and thus allowing us to reason about the translation, an annotated step notation is introduced. This is inspired by the tick-notation used in \cite{TypesForComplexityOfParallelComputationInPiCalculus}. Here, the $\bullet$ notation marks the important transitions in \epic that match a transition in \butf. 

\subsection{Well-Behavedness and Substitution}

In the translation we consider four different kinds of channels:
outputs ($o \in \Omega$), handles ($h \in \Lambda$), signals ($d \in
\Delta$), and collections ($c \in \Psi$).


In the following, we define $U$ as building blocks for translated processes, use
$\mathcal{U}$ as the set of all possible $U$. The intention is that for any $e$
there should exist a process $P$ and $o$ such that $\translate{e}_o\equiv P
\land P\in \mathcal U$.  We define the formation rules for $U$ as follows.
\begin{equation}
    \begin{aligned}
        U ::=\ 
        &\pirecv ov.U\mid
        \pirecv h{v,o}.U\mid
        {!}\pirecv h{v,o}.U\mid
        \pirecv{h\cdot n}{n,v}.U\mid
        \\&\pirecv{h\cdot\texttt{len}}n.U\mid 
        \pirecv{h\cdot\texttt{tup}}{v_1,\dots}.U\mid
        \pirecv{h\cdot\texttt{all}}c.U\mid
        {!}\pirecv{h\cdot\texttt{all}}c.U\mid
        \pibroad{h\cdot\texttt{all}}c.U\mid
        \\&\pirecv c{n,v}.U\mid
        {!}\pirecv c{n,v}.U\mid
        \pirecv d{}.U\mid
        [n\geq 0]U,0\mid
        [v\neq 0]U,U\mid
        U|U\mid
        \nu a.U\mid
        \\&\pisend ov\mid
        \pisend h{v,o}\mid
        \pisend{h\cdot n}{n, v}\mid
        {!}\pisend{h\cdot n}{n,v}\mid
        \\&\pisend{h\cdot\texttt{len}}n\mid
        {!}\pisend{h\cdot\texttt{len}}n\mid
        \pisend{h\cdot\texttt{tup}}{v_1,\dots}\mid
        {!}\pisend{h\cdot\texttt{tup}}{v_1,\dots}\mid
        \\&\pisend c{n,v}\mid
        \pisend d{}\mid
        \mit{Repeat}(n,c,d)\mid
        0
    \end{aligned}
\end{equation}
  Here, we consider $o\in\Omega$, $h\in\Lambda$, $d\in\Delta$, $c\in\Psi$, and $a\in\Omega\cup\Lambda\cup\Delta\cup\Psi$.
  The terms $v,v_1,v_2,\dots$ are used to signify numbers $n$ or handles, and we use $\Theta$ for these.
  Therefore for channels $o$ and $h$, it holds that $\pisend oh\in U$, and $\pisend o5\in U$, while $\pisend oo\not\in U$.
  
  Note that we use members of the sets above in name binding also, which is to signify which ``type'' of term is expected to be received on the channel.
  For example in $\pirecv ov.U$ for $o\in\Omega$ and $v\in\Theta$, the variable $v$ might be present in the process $U$ where it is used as a value assuming that what is sent on $o$ is actually in $\Theta$.
  If the term $t$ received on $o$ is not in $\Theta$, $U\pisubst vt$ would also not be in $\mathcal U$.

 \cref{lmm:stay_alive} ensures that any process
  $U\in\mathcal U$ continues to be well-behaved in regards to the
  translation channels.
  
\begin{lemma}\label{lmm:stay_alive}
    For any process $U$ it holds that if $U\mra P'$ and $U$ is not
    observable on any channel, then $P'\in\mathcal U$.
\end{lemma}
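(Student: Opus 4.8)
The plan is to prove \cref{lmm:stay_alive} by structural induction on the grammar of $U$, combined with a case analysis on the reduction $U \mra P'$. Since $U$ is a parallel composition (possibly under restrictions and replications) of the listed building blocks, and since $U$ is assumed not observable on any channel, the only reductions available are $\tau$-reductions arising from \runa{Comm} (a matching send/receive pair inside $U$), broadcast reductions from \runa{Broad}, or reductions from \runa{Then}/\runa{Else} resolving a conditional $[n \geq 0]U',0$ or $[v \neq 0]U_1,U_2$. I would first set up the invariant precisely: $\mathcal U$ is closed under structural congruence (so that \runa{Struct} steps are harmless), and it is closed under the syntactic substitutions $U\pisubst vt$ provided $t$ is of the expected ``type'' (a member of $\Theta$ for value variables, the appropriate channel sort for channel variables) --- this closure-under-substitution fact is essentially spelled out in the paragraph preceding the lemma and would be isolated as a small sub-claim (or invoked as an already-stated auxiliary lemma) before the main argument.

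For the inductive step I would go through the syntactic forms. The composition case $U_1 \mid U_2$ splits into: a reduction entirely within $U_1$ (or $U_2$), handled by the induction hypothesis after observing that a subterm of a non-observable $U$ followed by the residual context is again in $\mathcal U$; or a genuine interaction across the two components. For the cross-interaction case the key observation is the sorting discipline: a send $\pisend o v$ can only synchronise with a receive $\pirecv o v'.U'$, and the result $U' \pisubst{v'}{v}$ lies in $\mathcal U$ by the substitution sub-claim because $v \in \Theta$; similarly $\pisend h{v,o}$ with $\pirecv h{v',o'}.U'$ (or its replicated variant, where \runa{Replicate} first unfolds $!P \equiv P \mid !P$ and the residue $!\,\pirecv h{v',o'}.U'$ is itself a $U$); the composite-name sends $\pisend{h\cdot n}{n,v}$, $\pisend{h\cdot\texttt{len}}{n}$, $\pisend{h\cdot\texttt{tup}}{\dots}$, $\pisend c{n,v}$, $\pisend d{}$ each pair only with the correspondingly-labelled receiver in the grammar, and in each case the continuation is a $U$ and the substituted terms are of the expected sort. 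The broadcast case $\pibroad{h\cdot\texttt{all}}c.U'$ synchronising with some collection of $\pirecv{h\cdot\texttt{all}}{c_i}.U_i'$ receivers (replicated or not): here I would note that each receiver is fed the fresh channel $\mit{vals}$, i.e.\ an element of $\Psi$, so every residual $U_i'\pisubst{c_i}{\mit{vals}}$ remains in $\mathcal U$, and the combined parallel residue is in $\mathcal U$; the \runa{B-Par} side condition $Q \not\downarrow_c$ is automatically satisfied since $U$ is non-observable. The restriction case $\nu a.U'$ follows from the induction hypothesis plus the fact that $\mathcal U$ is closed under $\nu a.(-)$ for $a$ in the allowed sorts, and one checks \runa{Res-1}/\runa{Res-2} only relabel $\tau$ vs.\ $:c$ without changing the underlying process. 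The conditional cases are immediate: $[n \geq 0]U',0 \xrightarrow{\tau} U'$ or $\xrightarrow{\tau} 0$, both in $\mathcal U$; likewise $[v \neq 0]U_1,U_2$. The $\mit{Repeat}(n,c,d)$ atom must also be handled: unfolding its definition shows one step rewrites it (after a \runa{Comm} on the internal channel $c$) to either $\pisend{r}{n-1,n-1} \mid \pisend{c}{n-1} \mid \mit{Repeat}$-like residue, or to $\pisend d{}$, and one checks these residues are expressible as $U$'s.

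The main obstacle I expect is the substitution-closure sub-claim and getting the ``type'' bookkeeping exactly right: $\mathcal U$ is defined with variables tagged by which sort of term they should carry ($o\in\Omega$, $h\in\Lambda$, $d\in\Delta$, $c\in\Psi$, value variables in $\Theta$), and the lemma is only true because every actual communication in a translated process respects these sorts --- a send on an output channel always carries a $\Theta$-term, a send on an $\texttt{all}$-composite always carries a $\Psi$-name, and so on. Proving this requires checking that the grammar of $U$ is genuinely sort-consistent: whenever a receiver $\pirecv c{\vec x}.U'$ binds $\vec x$ and a matching sender exists (or could exist) in $\mathcal U$, the terms the sender transmits have exactly the sorts the grammar of $U'$ presupposes for $\vec x$. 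I would handle this by a finite enumeration over the (small) list of channel shapes in the grammar, pairing each receive-form with the send-form(s) on the same channel shape and reading off that the transmitted-term sorts match; then substitution-closure follows because substituting a sort-correct term for a variable never takes a production of the $U$-grammar outside the grammar. Once that is in place, the induction itself is routine bookkeeping, and \runa{Struct} is absorbed by first establishing closure of $\mathcal U$ under $\equiv$ (a separate easy induction on the congruence rules of \cref{fig:Pi_struct_rules}).
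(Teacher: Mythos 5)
The paper states \cref{lmm:stay_alive} without any proof --- there is no argument given in the text (nor is it among the lemmas explicitly deferred to \cite{p10}) --- so there is nothing to compare your proposal against directly. On its own merits, your plan is the natural one and is essentially what a complete proof would have to look like: induction over the grammar of $\mathcal U$, closure under structural congruence, a sort-consistency check pairing each receive-form with the send-forms on the same channel shape, and closure under sort-respecting substitution. You are also right that the substitution sub-claim is where all the real content lives; the paragraph in the paper preceding the lemma gestures at exactly this point ($U\pisubst vt \notin \mathcal U$ when $t \notin \Theta$), so isolating it as an auxiliary claim is the correct move.

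One concrete snag you should not wave away: the residues of $\mit{Repeat}$ are \emph{not} expressible in the grammar as literally written. The grammar's conditional production is $[n\geq 0]U,0$ with else-branch exactly $\pizero$, whereas unfolding $\mit{Repeat}(s,r,d)$ after one internal communication on its private channel yields a conditional whose else-branch is $\pisend d{}$. So the step ``one checks these residues are expressible as $U$'s'' fails for the grammar as given, and the lemma is false as stated unless the grammar is repaired (e.g.\ by adding a production $[n\geq 0]U,\pisend d{}$ or by generalising the conditional's else-branch). This is arguably a defect of the paper's definition of $\mathcal U$ rather than of your argument, but a careful proof must either fix the grammar or treat every reachable unfolding of $\mit{Repeat}$ as an additional atom of $\mathcal U$; as written, your proof would stall at exactly this case.
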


In \butf function application is done by substituting a value into the function body.
For numbers, this is simple as the number simply gets substituted into the process.
However, in the translated process, the function, array, and tuple servers cannot be substituted into a process, and therefore, lie outside of it.
This creates a structural difference between $\translate e_o$ and $\translate{e'}_o$, which \Cref{lmm:find_replace} shows that they still behave the same under $\ba$.
\begin{theorem}
    For values $e_1$ and arbitrary expressions $e_2$, we have that
    \begin{enumerate}
        \item if $e_1$ is a number ($n$) then
          $\transtopi{e_2}_o\pisubst xn\ba\transtopi{e_2\{x\coloneqq
          n\}}_o$ for some $o$,
        \item or if $e_1$ is an abstraction, tuple, or array then $\nu
          h.(Q\mid\translate{e_2}_o\pisubst
          xh)\ba\translate{e_2\{x\coloneqq e_1\}}_o$ for some $o$.
        Here, $Q$ is $\translate{e_1}_o$ after sending $h$ on $o$, i.e. $\translate{e_1}_o\mid\pirecv ox.P\xRightarrow{\bullet}\nu h.(Q\mid P\pisubst xh)$.
    \end{enumerate}
     \label{lmm:find_replace}
\end{theorem}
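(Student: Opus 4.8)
The plan is to prove both parts by structural induction on the expression $e_2$, exhibiting in each case an explicit weak administrative barbed bisimulation (wabb). For part (1), where $e_1 = n$ is a number, the substitution $\pisubst xn$ genuinely commutes with the translation: in every clause of the translation (Figures~\ref{eq:translate_normal_stuff} and the array/tuple/operator clauses), $x$ only ever appears either as a term being sent, as a subterm of an arithmetic expression $T\odot T$, or as a bound variable that $\alpha$-conversion keeps disjoint from $n$. So I would first argue that $\translate{e_2}_o\pisubst xn$ and $\translate{e_2\{x\coloneqq n\}}_o$ are in fact \emph{structurally congruent} (not merely bisimilar) whenever $x\notin\mit{fv}$ of any binder, which reduces part (1) to a syntactic lemma about the translation commuting with number-substitution; then $\ba$ follows since $\equiv$ is contained in any wabb by \runa{Struct}-style closure and reflexivity. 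The only subtlety is the side condition that $n$ is closed, so no capture occurs; this is routine.

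For part (2), the content is that a function/tuple/array handle $h$ cannot literally be substituted for $x$ because the ``server'' process (the replicated listener on $f$, resp.\ the $!\pisend{h\cdot\tup}{\dots}$, resp.\ the parallel composition of $\mit{Cell}$s and the length sender) must remain as a top-level parallel component $Q$ rather than being copied into each occurrence of $x$. I would define the candidate relation $R$ to contain all pairs of the form
\[
\bigl(\nu h.(Q \mid D[\translate{e_2}_o\pisubst xh]),\ D[\translate{e_2\{x\coloneqq e_1\}}_o]\bigr)
\]
closed under reduction and under the translation contexts $D$ that can arise, together with its symmetric closure and all context-closures (clause~3 of \cref{def:wabb} forces this, so I would actually take $R$ to be closed under arbitrary $\epic$ contexts from the start). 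The key observation making this a bisimulation is that $Q$ is \emph{persistent and replicated}: every interaction that the substituted process performs with the handle $h$ (a read on $f$, on $h\cdot i$, on $h\cdot\texttt{all}$, on $h\cdot\texttt{len}$, on $h\cdot\tup$) is answered by $Q$ via an administrative ($\circ$) step and leaves $Q$ unchanged up to $\equiv$ (using \runa{Replicate}), so each such step on the left is matched by zero steps on the right, while the value delivered is exactly what the right-hand side's inlined handle delivers. Here I would invoke \cref{lmm:stay_alive} to guarantee that after any such reduction both components remain in $\mathcal U$, so the relation is closed and the channel discipline ($o\in\Omega$, $h\in\Lambda$, etc.) is preserved; and I would invoke the hypothesis $\translate{e_1}_o\mid\pirecv ox.P\xRightarrow{\bullet}\nu h.(Q\mid P\pisubst xh)$ to know precisely what $Q$ is and that it is reached by administrative steps after a single $\bullet$.

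The main obstacle, and where I would spend the most care, is the bookkeeping for \emph{multiple occurrences} of $x$ in $e_2$: on the right-hand side, $e_2\{x\coloneqq e_1\}$ substitutes a \emph{fresh copy} of the translation of $e_1$ at each occurrence, so there one has several independent handles $h_1,h_2,\dots$ each with its own server, whereas on the left there is a single shared $h$ with a single (replicated) server $Q$. Showing these are bisimilar amounts to a sharing-vs-copying argument: because $Q$ is replicated and stateless (it only ever \emph{reads} requests and \emph{re-sends} the same fixed data — none of $\mit{Cell}$, the $\tup$-sender, or the function server mutates any state), $n$ copies of $Q$ on fresh names are observationally equivalent to one shared $Q$, under renaming. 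I would make this precise as an auxiliary lemma (``replication of a read-only server commutes with $\equiv$ and $\ba$ across name copies''), proved once and for all, and then the inductive step for application, indexing, $\map$, tuples, arrays, and \texttt{if} each goes through by pushing the induction hypothesis through the relevant translation context $D$ and appealing to this lemma plus \cref{lmm:stay_alive}. The conditional case additionally uses that $[v\neq 0]\dots$ reduces identically on both sides once the guard value is delivered, and the $\map$/$\iota$ cases use that $\mit{Repeat}$ and the $\mit{Cell}$-creation loop are driven only by the delivered values, not by whether the handle is shared.
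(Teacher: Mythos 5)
Your proposal matches the paper's own proof in both structure and substance: part (1) is handled by observing that number-substitution commutes syntactically with the translation at occurrences of $x$ (so that $\pisend ox\pisubst xn = \pisend on$), and part (2) is handled by exhibiting exactly the sharing-versus-copying bisimulation you describe --- the paper's relation $R$ relates a single shared replicated server $Q$ on one handle $h$ to several per-occurrence copies $Q_1,\dots,Q_m$ on fresh handles, with a partition function distributing the uses among the copies, and the case analysis on transitions (internal to the context, internal to $U$, or a use $P_i$ interacting with a server) is the same as yours. The proposal is correct and takes essentially the same route as the paper.
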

\begin{proof}
    \begin{enumerate}
    \item When $\translate{n} = \pisend{o}{n}$ and $\translate{x} =
      \pisend{o}{x}$, we have that $ {\pisend{o}{x}}\pisubst xn \ba
      {\pisend{o}{x}} \{x\coloneqq n\} \land \pisend{o}{x}\{x\coloneqq
      n\} = \pisend{o}{n} $.  Thus $\pisend{o}{x}\pisubst xn \ba \pisend{o}{n}$.

    \item  In the process $\translate{e_2\{x\mapsto e_1\}}_o$ there
      can be different servers all of which stem from the translation of $e_1$.
    Each of these servers can have a number of usages, where a handle is communicated on to access a specific server.
    We denote $\mathcal P=\{P_1,\dots, P_n\}$ as a collection of usages
    of these servers in the translation, and therefore
    $P\subseteq\mathcal U$. And $\mathcal Q=\{Q_1,\dots,Q_m\}$ is a collection of servers, such that for some $Q_i\in\mathcal Q$, $Q_i$ communicates on $h_i$ instead of $h$.

    Now we introduce the relation $R$, which relates processes with a single server channel with processes where the same server channels is repeated for multiple handle channels.
    Here processes in $\mathcal P$ are in a context and either communicate with a single $Q$ on $h$ (the left side), or with multiple $Q$'s with multiple $h$'s (the right side).
    The function $f : \mathcal Q\mra \mathbb P(\mathcal P)$ takes a single $Q_i$ and returns the uses of said $Q$, these uses $P_i$ normally use the channel $h$, but have to be substituted to use $h_i$.
    \begin{equation}\label{eq:conject_thing_2}
    \begin{aligned}
        R = \{&(K[\nu h.\nu\mathcal A.(Q\mid\prod_{P_i\in\mathcal P}P_i\mid U)],K[\nu h_1.\dots\nu h_m.\nu\mathcal A.\\
        &(\prod_{Q_i\in\mathcal Q} Q_i\mid \prod_{Q_l\in\mathcal Q}\prod_{P_i\in f(Q_l)} P_i\pisubst h{h_l}\mid U)])\mid\\
        &U\not\downarrow_h\land \forall i\in[1..n].U\downarrow_{h_i}\land\bigcup_{Q_i\in\mathcal Q}f(Q_i)=\mathcal P\\
        &\land
    \forall Q_i,Q_j\in\mathcal Q.f(Q_i)\cap
    f(Q_j)=\varnothing\land\forall a.((\nu\mathcal A.U)\not\downarrow
    a)\\
        &\}
    \end{aligned}
    \end{equation}
    Now we show that $R$ is a WABB, by considering the transitions each side can take.
    First, consider when the left transitions, and identify four cases.
    \begin{enumerate}
        \item For an internal communication of the form $K[0]\mra
          K'[0]$, we can use the same $K'$ on the right side, and show
          that the new pair is in $R$. 
        \item For an internal communication in $U$ of the form $U\mra
          U'$ we might introduce a new process $P$ or $Q$ in $\mathcal
          U$, which can be moved out of $U'$ such that
          $U'\not\downarrow_h$.  We can match this transition on the
          right side, and through $\equiv$ the pair is still in $R$. 
        \item A $P_i$ communicates with $Q$ on the channel $h$.
        \[
            P_i\mid Q\mra Q\mid S
        \]
        We consider the different forms which $Q$ an take, depending on whether $e_1$ is an abstraction, tuple, or array.
        \begin{enumerate}
            \item If $e_1=\lambda x.e_b$, then $Q$ takes the form shown below.
            \[
                Q={!}\pirecv h{x,r}.\translate{e_b}_r
            \]
            Given that $P_i$ communicated with $Q$, means that $P_i=\pisend h{x,r}.S'$, where $S'$ is some arbitrary process.
            This communication will therefore uncover $S$ and spawn $\translate{e_b}_r$.
            Similarly to the first case, we can find a new $U'$, and
            $\mathcal P'$, such that $U'$ does not contain $h$. Then
            the following holds. 
            \[
                Q\mid \prod_{P_i\in \mathcal P}P_i\mid U\mra Q \mid
                \prod_{P_i\in\mathcal P'}P_i\mid U'
            \]
            With the right side, the same transition can be taken by $P_i\pisubst h{h_l}$ on the channel $h_l$ with the server $Q_l$.
            Here, we can find a new $\mathcal Q'$ such that the pair resulting from the two transitions is in $R$.
            \item If $e_1=(e_{1,1},\dots,e_{1,o})$, then $Q$ is as follows for some $T_1$ to $T_o$
            \[
            Q = {!}\pisend{h\cdot(-1)}{T_1,\dots,T_o}
            \]
            The proof proceeds as in the case of abstraction, except
            that now $Q$ will not spawn any new processes.
            \item If $e_1=[e_{1,1},\dots,e_{1,o}]$.
            Here, $Q$ will be as follows for terms $T_1$ to $T_o$.
            \[
            Q = \prod_{i\in 1..o}({!}\pirecv {h\cdot\texttt{all}}r.\pisend r{T_i}\mid{!}\pisend{h\cdot i}{T_i})\mid {!}\pisend h{o}
            \]
            Because $Q$ is a collection of parallel replicated sends and receives, it acts in much the same way as in the case of tuples.
            We can therefore follow the same reasoning as in the
            previous cases.
        \end{enumerate}
        \end{enumerate}
        We know that no other cases exists for the transition, given that $Q$ and $P_i$ cannot communicate with either $U$ or $K$, given that these processes do not contain $h$.

        We now consider then the right side of a pair in $R$ transitions, and again identify four cases
        \begin{enumerate}
            \item Internal communication in K, which is similar to case (2) above.
            \item Internal communication in U, which is similar to case (3) above.
            \item A $P_i\pisubst h{h_j}$ communicates with a $Q_j$ on a channel $h_m$.
            \[
                P_i\pisubst h{h_j} | Q_j \mra Q\mid S
            \]
            We only consider the case when $e_1= \lambda x.e_b$, as the other cases easily follow.

            In this case $Q_j$ will again take the form shown below.
            \[
                Q_j = {!}\pirecv {h_j}{x,r}.\translate{e_b}_r
            \]
            Like in case (4) above, we can construct new $U'$ and $\mathcal P'$ to accommodate the new processes after the reduction.
        \end{enumerate}

        Finally, we must show that the pair below is in $R$.
        \[
          (\nu h.(Q\mid\translate {e_2}_o\pisubst xh),\translate{e_2\{x\coloneqq e_1\}}_o)
        \]

        In $e_2$ a number of uses of the variable $x$ exists. In the
        translation $\translate{e_2}_o$, each of these usages of $x$
        have been replaced by $\pisend{o'}x\in U$.
        In $e_2\{x\coloneqq e_1\}$ each of the $x$'es have been replaced by the whole of $e_1$, and the translation $\translate {e_2\{x\coloneqq e_1\}}_o$ then contains multiple instances of $\translate{e_1}_{o'}$ for some output channel $o'$.
        Each of these instances has the form $\translate{e_1}_{o'} = \nu h.(Q\mid\pisend{o'}h)$.

        We know that both $\translate{e_2}_o\pisubst xh$ and
        $\translate{e_2\{x\mapsto e_1\}}_o$ are in $\mathcal U$, and we
        can match them to a pair in $R$ by structural congruence.

    \end{enumerate}
\end{proof}

\subsection{Operational Correspondence}

We consider the translation to be correct when it preserves the
reduction sequence and the result of the program.  To do this, we
define an operational correspondence, which ensures translation
correctness.

\begin{definition}[Administrative Operational Correspondence]
Let $R$ be a binary relation between an expression and a process.
Then $R$ is an \emph{administrative operational correspondence} if $\forall (e,P) \in R$ it holds that
    \begin{enumerate}
        \item if $e \mra e'$ then there $\exists P'$ such that $P \xRightarrow\bullet\ba P'$ and $(e',P')\in R$, and
        \item if $P \xRightarrow\bullet P'$ then there $\exists e',Q$ such that $e\mra e'$, $Q\ba P'$, and $(e',Q)\in R$.
    \end{enumerate}
    We denote $e \opcor P$ if there exists an operational correspondence relation $R$ such that $(e,P) \in R$.
\end{definition}

This definition achieves soundness by guaranteeing that all reductions
that happen in a \butf program $e$ can be matched by a sequence of
reductions in the corresponding \epic process $P$. The completeness is
ensured by requiring that for any important reduction
$P \xRightarrow\bullet P'$ where $e \opcor P$, we have that
$e$ can evolve to some $e'$ for which there exists some $Q$ where
$e' \opcor Q$ and $Q$ is bisimilar to $P'$.

We will now attempt to prove administrative operational correspondence
for \butf and \epic. The lemma below is used to identify the possible
reduction cases when $P$ is contained within a context, and is
usefully for simplifying program behavior.
\begin{lemma}
    For any $P$ and $C$, if $Q$ exists such that $C[P]\xrightarrow s Q$, then one of the following holds:
    \begin{enumerate}
        \item $C$ reduces alone, thus $Q=C'[P]$ with context $C'$ such that $C[\pizero]\xrightarrow sC'[\pizero]$,
        \item $P$ reduces alone, thus $Q=C[P']$ with $P\xrightarrow sP'$, and
        \item $C$ and $P$ interact, thus $Q=C'[P']$ for $P'$ and $C'$ such that $O$ exists where $O\mid P\xrightarrow sO'\mid P'$, $C[P]\xrightarrow sC'[P']$, and $C[P]\equiv\nu{\Vec a}.(O\mid P)$.
    \end{enumerate}
    \label{lmm:process_in_context}
\end{lemma}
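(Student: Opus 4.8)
The plan is to argue by induction on the structure of the context $C$, in each case inspecting which rule of \cref{fig:pi_reduction_rules} derives $C[P]\xrightarrow s Q$. The guarded occurrences of the hole are immediate: if the hole of $C$ sits under an action prefix or under a $\bullet$ then $C[P]$ has no $\tau$- or ${:}b$-transition at all, and if it sits inside a branch of a conditional then any transition is produced by \runa{Then}/\runa{Else} acting on $C$, so case~1 holds with the evident residual context; a hole under replication is first exposed with \runa{Replicate} and then treated as a parallel composition. It therefore suffices to handle \emph{static} contexts, and for these, repeated use of \runa{Par-A}, \runa{Par-B}, \runa{New-A} and \runa{New-B} puts $C[P]$ into the normal form $\nu\vec a.(O\mid P)$, where $O$ is ``the remainder of the context''; by \runa{Struct} it then suffices to classify the transitions of $\nu\vec a.(O\mid P)$.

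For this normal form I would peel off the leading restrictions with \runa{Res-1}/\runa{Res-2}, reducing the question to a transition of $O\mid P$, and then case on its last rule. If the rule is \runa{Par} or \runa{B-Par} whose premise is a transition of $O$, we are in case~1; if the premise is a transition of $P$ --- so that, for a ${:}c$-label, the inverted \runa{B-Par} instance also witnesses $O\not\downarrow_c$ --- we are in case~2, lifting the $P$-transition back through $O$ and through $\nu\vec a$. If the rule is \runa{Comm} with one of the two communicating prefixes supplied, up to $\equiv$, by $O$ and the other by $P$, then the label is $\tau$ throughout, $O\mid P\xrightarrow\tau O'\mid P'$ witnesses the interaction, $C[P]\equiv\nu\vec a.(O\mid P)$ by construction, and reinstating the restrictions gives $C[P]\xrightarrow\tau C'[P']$ with $C'=\nu\vec a.(O'\mid[\cdot])$; this is case~3.

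The main obstacle is the broadcast cluster of rules --- \runa{Broad}, \runa{B-Par}, \runa{Res-1} and \runa{Res-2}. Unlike \runa{Comm}, \runa{Broad} consumes a broadcast output \emph{together with every parallel receiver listening on the same channel}, and \runa{B-Par} lets a ${:}c$-transition cross a parallel boundary only when the other component carries no barb on $c$. Hence, when $O$ and $P$ jointly perform a broadcast on a channel $c$, the broadcaster may live on either side while the receivers are distributed across both, and when $c$ is built from a restricted name the whole step is relabelled to $\tau$ by \runa{Res-2}. Making such a step fit case~3 requires splitting the redex into an $O$-part and a $P$-part, re-deriving $O\mid P\xrightarrow s O'\mid P'$ by \runa{Broad} followed by \runa{B-Par} while checking that the residual process still satisfies the relevant barb conditions, and then verifying that reinstating $\nu\vec a$ and the remaining parallel components reproduces exactly the original transition, any relabellings included. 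This is a bookkeeping-heavy sub-induction on the derivation of the broadcast transition, tracking the set of participating receivers together with the $\not\downarrow$ side-conditions; once it is settled, the remaining structural-congruence and restriction cases are mechanical.
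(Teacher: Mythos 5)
The paper states this lemma without proof (the details are presumably deferred to the companion thesis), so there is no in-paper argument to compare yours against; your overall strategy --- dispose of guarded holes, normalise a static context to $\nu\vec a.(O\mid P)$ via \runa{Par-A}, \runa{Par-B}, \runa{New-B} and \runa{Struct}, then case on the last rule of the derivation --- is the standard skeleton and the right one. Two points, however, are genuine gaps rather than omitted bookkeeping. First, your claim that a hole under an action prefix gives $C[P]$ no $\tau$- or ${:}b$-transition is false: the prefix guarding the hole can itself be consumed by \runa{Comm} or \runa{Broad} against another component of $C$; for example $C=\pirecv ax.[\cdot]\mid\pisend a5$ gives $C[P]\xrightarrow{\tau}P\pisubst x5\mid\pizero$. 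That step does not land in case~1 as you state it, because the residue is $C'[P\sigma]$ for the input substitution $\sigma$, not $C'[P]$; you need either a non-capture assumption relating $C$ and $P$ (harmless for the translation contexts the paper actually uses, but it must be stated) or an additional clause, and in any event the guarded case cannot be discharged by asserting there are no transitions.

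Second, the broadcast cluster is the only part of this lemma that differs from its textbook point-to-point analogue, and you describe the difficulty accurately but do not resolve it --- and it is not mere bookkeeping, because it puts pressure on the statement itself. If a broadcast on $c$ involves receivers in both $O$ and $P$, the witness transition is $O\mid P\xrightarrow{:c}O'\mid P'$; if $c$ is built from one of the restricted names $\vec a$, \runa{Res-2} relabels the outer step, so $C[P]\xrightarrow{\tau}C'[P']$ and the two transitions demanded by case~3 do not carry the same label $s$. You wave at this with ``any relabellings included,'' but a correct proof must either weaken case~3 to permit the inner label ${:}c$ against an outer $\tau$, or pin down what $s$ ranges over: the paper is silent on whether $s$ is drawn from $\{\tau,{:}c\}$ or from $\{\circ,\bullet\}$, and under the latter reading an unrestricted broadcast between $O$ and $P$ is matched by no $\circ$- or $\bullet$-transition of $O\mid P$ at all, since \runa{Adm} converts only $\tau$. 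Settling this, together with discharging the $\not\downarrow_c$ side-conditions of \runa{B-Par} so that every receiver on $c$ is accounted for on one side or the other, is the actual content of the proof; as written, your argument establishes the easy cases and defers precisely the one that matters.
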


The first step to prove the operational correspondence, is proving 
that values always send on $o$. This is shown in the following lemma.
\begin{lemma}
  Let $e$ be a value. Then
  $\exists P.\transtopi e_o\xrightarrow{\circ} P\wedge
  P\downarrow_{\overline o}$.
    \label{lmm:to_o_or_not_to_o}
\end{lemma}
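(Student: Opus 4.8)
The plan is to prove the statement by structural induction on the value $e$. Because $e$ is a value, each of its subcomponents is a value too, so the induction hypothesis applies to each, and the only work is to trace how the translation of a value emits a term on its out channel. The unifying observation is that evaluating a value never triggers a $\bullet$-annotated reduction: every step used to expose an output on $o$ is an internal communication on a restricted channel, i.e.\ a $\tau$-transition, hence administrative by rule \runa{Adm}. (We read $\xrightarrow\circ$ in the statement as the reflexive--transitive closure $\xRightarrow\circ$, so that a value whose translation is already observable on $\overline o$ -- as happens for $e=n$ -- is covered with zero steps.)

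For the base cases, if $e=n$ then $\translate n_o = \pisend on$, which already satisfies $\downarrow_{\overline o}$, so $P=\translate n_o$ works. If $e$ is one of the function symbols $\map,\iot,\size,\odot$ -- translated, like $\lambda$-abstraction, as a function server $\nu f.(\pisend of \mid {!}\pirecv f{\dots}.\dots)$, cf.\ the clause $\translate{\lambda x.e_1}_o = \nu f.(\pisend of \mid {!}\pirecv f{x,r}.\translate{e_1}_r)$ -- the output $\pisend of$ is not captured by the restriction of $f$, so $\downarrow_{\overline o}$ holds with no reductions. The empty tuple $()$ and empty array $[\,]$ are the $n=0$ instances of the clauses treated below and likewise expose $\pisend oh$ immediately.

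The substantive cases are $e=(v_1,\dots,v_n)$ and $e=[v_1,\dots,v_n]$ with $n\ge 1$, which share the shape $\translate e_o = \nu o_1.\cdots.\nu o_n.(\prod_{i=1}^n \translate{v_i}_{o_i} \mid \pirecv{o_1}{v_1}.\cdots.\pirecv{o_n}{v_n}.R)$, with for arrays an extra $\nu h$ just inside the $\nu o_i$'s, and where the receiver continuation $R$ contains an occurrence of the output $\pisend oh$ that is under no input/output prefix, with $o$ distinct from $h$ and from every $o_i$. By the induction hypothesis, for each $i$ there is $P_i$ with $\translate{v_i}_{o_i}\xRightarrow\circ P_i$ and $P_i\downarrow_{\overline{o_i}}$; since these reductions are $\tau$-transitions they propagate through parallel composition (\runa{Par}) and restriction, so the whole process reduces administratively to $\nu o_1.\cdots.\nu o_n.(\prod_i P_i \mid \pirecv{o_1}{v_1}.\cdots.\pirecv{o_n}{v_n}.R)$. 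Now $P_1\downarrow_{\overline{o_1}}$ means that, up to $\equiv$, $P_1$ exposes an unguarded output on $o_1$, which synchronises with the (also unguarded) first receiver by \runa{Comm}, a $\tau$-step; iterating over $o_2,\dots,o_n$ -- interleaving, if desired, the hypothesis reductions for $v_j$ just before the $j$-th \runa{Comm} -- we reach after $n$ more administrative steps a process in which $R$, with the received terms substituted in, is no longer under a prefix. Its subterm $\pisend oh$ is still unguarded (it mentions no received term and no restricted name other than $h$, and $h\ne o$), so the process is observable on $\overline o$; this is the required $P$.

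I expect the main obstacle to be the bookkeeping in this last case, on two points. First, one must justify that the induction-hypothesis reductions of $\translate{v_i}_{o_i}$ remain available once that subprocess sits under the nested restrictions $\nu o_1.\cdots.\nu o_n$ (and $\nu h$) and in parallel with the other translated components and the receiver; this is a routine context-closure argument for $\tau$-transitions, but it relies on the structural-congruence rules (\runa{Par-A}, \runa{Par-B}, \runa{New-A}, \runa{New-B}) to normalise the term into the shape that \runa{Comm} expects and to keep the $P_i$'s from interacting on one another's private channels. Second, turning ``$P_i\downarrow_{\overline{o_i}}$'' into an actual synchronisation requires unfolding the barb predicate to a concrete $\equiv$-normal form $\nu\vec b.(\pisend{o_i}{T}.Q \mid S)$ with $o_i\notin\vec b$ and applying scope extrusion so that this output and the sequentially-guarded receiver end up in the same parallel context. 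Neither point is deep, and the remaining cases follow directly from the translation clauses; Lemma~\ref{lmm:stay_alive} may be cited to confirm that the leftover processes are still well-behaved, although this is not needed for the statement itself.
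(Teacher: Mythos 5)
Your proof is correct and takes essentially the same route as the paper's: induction on the nesting depth of the value, with numbers and abstractions as immediate base cases and tuples/arrays handled by applying the induction hypothesis to each element and then threading the results through the chain of receives until $\pisend{o}{h}$ is unguarded. Your treatment is if anything slightly more careful than the paper's (you explicitly cover the function constants and the structural-congruence bookkeeping, which the paper elides).
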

\begin{proof}
  We let $\mathcal D(e)$ denote the depth of $e$
  and proceed by induction on $D(e)$. If $e$ is a number or an
  abstraction, then $\mathcal D(e)=0$.  However, if $e$ is a tuple or
  array with elements $e_0$ to $e_{m}$, then
  $\mathcal D(e)=\max_{i\in[0..m]}(\mathcal D(e_i)) + 1$.  By
  induction on $\mathcal D(e)$ we show that the lemma holds for all
  $e$.  In the base case $\mathcal D(e)=0$, and thus $e$ is either a
  number or abstraction.  From the translation of a number or an
  abstraction, we know that $\transtopi e_o\downarrow_{\overline o}$,
  which is consistent with the lemma for $e$.  In the inductive case,
  where $\mathcal D(e)>0$, $e$ must be either a tuple or array with
  elements $e_0$ to $e_m$.  Here, the lemma holds for all $e'$ where
  $\mathcal D(e')<\mathcal D(e)$, and in extension $e_0$ to $e_n$.  If
  $e$ is a tuple, then we can take reductions such that
  $\transtopi{e_0}_{o_0}$ to $\transtopi{e_m}_{o_m}$, all send on
  channels $o_0,\dots,o_m$.  Then $\pisend oh$ is unguarded.  For
  array, after it has sent on channels $o_0$ to $o_m$, it can then
  receive on \textit{done} because it has sent $m+1$ values.  Then
  $\pisend o{\mit{handle}}$ is unguarded.
\end{proof}

The proofs of the next two lemmas can be found in
\cite{p10}. The first lemma is used to remove the no longer used parts
of the program and thus allows for a simple garbage collection.
\begin{lemma}
    If $P\ba 0$, then for all $Q$ it holds $P\mid Q\ba Q$.
    \label{lmm:garbage_collection}
\end{lemma}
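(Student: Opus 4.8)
The plan is to build, directly, a weak administrative barbed bisimulation relating $C[P\mid Q]$ with $C[Q]$ for every context $C$ and process $Q$. The first step is to read off from $P\ba 0$ everything one needs about $P$. By \cref{def:wabb}(4), and since $\pizero$ has no barb and cannot reduce to acquire one, $P$ has no free barb. By \cref{def:wabb}(1), and since $\bullet\pizero$ has no transition, $P$ performs no $\xrightarrow\bullet$ step. Pushing further, \cref{def:wabb}(3) applied to the one-hole context $\bullet[\cdot]$ puts $(\bullet P,\bullet\pizero)$ in the bisimulation; since $\bullet P\xrightarrow\bullet P'$ holds by \runa{NonAdm} whenever $P\mra P'$, while $\bullet\pizero$ is stuck, $P$ can perform no reduction at all: $P$ is \emph{inert}.

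With this established, I would take $R$ to be the symmetric, $\equiv$-closed relation generated by all pairs $(C[P\mid Q],C[Q])$ with $C$ a context, $Q$ a process and $P$ inert; instantiating $C=[\cdot]$ recovers $(P\mid Q,Q)$, so it suffices to check that $R$ is a wabb. Closure under contexts (\cref{def:wabb}(3)) is immediate because $C'\circ C$ is again a context. The barb clause (\cref{def:wabb}(4)) is immediate because barbs are structural and an inert $P$ contributes none, so $C[P\mid Q]$ and $C[Q]$ carry exactly the same barbs.

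For the transition clauses I would apply \cref{lmm:process_in_context} to $P\mid Q$ sitting inside $C$. If $C$ moves on its own, the identical move is available to $C[Q]$ and the residuals remain $R$-related. If the hole moves, inertness forces every move of $P\mid Q$ to be a move of $Q$ with $P$ untouched, so $C[Q]$ matches it; this also absorbs the delicate case where $C$ places a $\bullet$ immediately around the hole, since $\bullet(P\mid Q)$ can step exactly when $Q$ can. If $C$ and the hole interact, inertness again makes the interaction really one between the surrounding process and $Q$, so $C[Q]$ offers it too. The converse direction is symmetric: every move of $C[Q]$ is mimicked by $C[P\mid Q]$ with the inert $P$ left idle, the only thing to verify being that an inert $P$ is never enlisted in a broadcast, which holds because the \runa{B-Par} side condition $P\not\downarrow_c$ is satisfied. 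Closing $R$ under $\equiv$ absorbs the harmless reshuffling from \cref{lmm:process_in_context}, so $R$ is a wabb and $P\mid Q\ba Q$.

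I expect the main obstacle to be the bookkeeping around the $\bullet$-marker: one must notice that $P\ba 0$ collapses to full inertness of $P$ (an argument using only ``$P$ has no barb'' breaks at contexts that mark the hole with $\bullet$), and one must be careful that a broadcast cannot silently enlist an inert $P$. A slicker but less self-contained route is to invoke context-closure of $\ba$ to get $P\mid Q\ba\pizero\mid Q$, then combine $\pizero\mid Q\equiv Q$, $\equiv\,\subseteq\,\ba$ and transitivity of $\ba$; I prefer the direct construction because \cref{lmm:process_in_context} is tailor-made for it, whereas the shortcut still pays the same $\bullet$-and-broadcast costs inside the proofs of $\equiv\subseteq\ba$ and transitivity of $\ba$.
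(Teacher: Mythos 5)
The paper does not actually give a proof of this lemma---it is one of the two results explicitly deferred to the cited thesis \cite{p10}---so there is no in-paper argument to compare yours against line by line. Judged on its own terms, your construction is sound and essentially a direct, self-contained proof. The one genuinely delicate point is the one you identify: clause (4) of \cref{def:wabb} only yields that $P$ is barb-free, and clause (1) only that $P$ has no $\xrightarrow\bullet$ step; neither rules out administrative self-reductions of $P$. Your move of instantiating the context closure with $\bullet[\cdot]$ and observing that $\bullet\pizero$ is stuck, so that any reduction of $P$ at all would violate clause (1) for the pair $(\bullet P,\bullet\pizero)$, is correct and is exactly what makes the candidate relation $\{(C[P\mid Q],C[Q])\}$ close up without having to track residuals of $P$. (Strictly, one could instead keep the weaker invariant ``$P\ba 0$'' in the relation and match administrative moves of $P$ by zero moves on the right, but your stronger inertness claim is valid and simplifies the bookkeeping.) One small imprecision worth fixing: when you say ``inertness forces every move of $P\mid Q$ to be a move of $Q$'', inertness of $P$ in isolation does not by itself exclude a communication \emph{between} $P$ and $Q$; you need the barb-freeness of $P$ (which you did establish from clause (4)) both for this and for the \runa{B-Par} side condition, so make that dependence explicit. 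Your closing remark is also apt: the short route via context closure of $\ba$, $\pizero\mid Q\equiv Q$, $\equiv\ \subseteq\ \ba$ and transitivity is almost certainly the intended argument in \cite{p10}, and your direct construction trades that reliance on unproved meta-properties of $\ba$ for the explicit case analysis of \cref{lmm:process_in_context}.
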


The second lemma is the converse of \cref{lmm:to_o_or_not_to_o}. It tells us that if the encoding of a \butf expression
$e$ is eventually able to output on the $o$ name, then $e$ is a value.

\begin{lemma}
    If for some expression $e$, $\exists P.\transtopi e_o\xRightarrow{\circ} P\wedge P\downarrow_{\overline o}$ then $e \in \mathcal{V}$.
    \label{lmm:to_o_or_not_to_o_part_two}
  \end{lemma}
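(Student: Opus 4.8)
The plan is to establish the contrapositive: if $e \notin \mathcal V$ then $\translate e_o$ never becomes observable on $\overline o$ via administrative steps, i.e. $\translate e_o \xRightarrow\circ P$ implies $P \not\downarrow_{\overline o}$; this is exactly the converse of \cref{lmm:to_o_or_not_to_o}. Throughout we take $e$ to be closed, which is the case relevant to operational correspondence --- for an open expression the claim already fails on $\translate x_o = \pisend ox$. The proof is a structural induction on $e$, and the only shapes needing argument are those that are \emph{not} already values: ordinary application $e_1\ e_2$, the special clauses $\size\ e_1$, $\iot\ e_1$ and $\map\ e_1$, indexing $e_1[e_2]$, the conditional $\texttt{if}\ e_1\ \texttt{then}\ e_2\ \texttt{else}\ e_3$, and a tuple or array one of whose components is not a value.

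For the ``redex-headed'' cases --- application, indexing, the conditional, and the $\map$ clause --- inspection of the relevant translation clause shows that every path from $\translate e_o$ to a state with an unguarded output on $o$ must traverse a $\bullet$-prefixed action, either because the emitting $\pisend o{\cdot}$ literally sits under a $\bullet$ (as for the conditional and indexing) or because it is spawned by a function/array server only after a $\bullet$-guarded $\pisend f{v,o}$ has been delivered (as for application and $\map$). Since $\xRightarrow\circ$ is built only from $\mraw$ steps, and the only rule that consumes a $\bullet$ is \runa{NonAdm}, which produces a $\xrightarrow\bullet$ transition rather than a $\tau$, that prefix survives every administrative reduction, so no such emitter ever becomes unguarded and $P \not\downarrow_{\overline o}$. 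To turn ``stays guarded'' into a genuine invariant along the run I would argue by a secondary induction on the length of the administrative sequence, using \cref{lmm:process_in_context} to classify each reduction as acting purely in the $\bullet$-protected remainder, purely in the surrounding part, or as an interaction between the two on a bound channel --- none of which discharges the $\bullet$ --- and using \cref{lmm:stay_alive} to stay inside $\mathcal U$, where $\overline o$-observability amounts precisely to the presence of an unguarded $\pisend o v$. This is also where the array-operator clauses must be checked individually: one verifies clause by clause that the value-emitting output really does sit beyond a $\bullet$, and the $\size$ and $\iot$ clauses, whose results are read off the argument's handle, deserve particular care here.

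For a tuple $(e_1,\dots,e_n)$ or array $[e_1,\dots,e_n]$ with some $e_i \notin \mathcal V$, the handle-returning output $\pisend oh$ of the translation is guarded by the chain $\pirecv{o_1}{v_1}.\cdots.\pirecv{o_n}{v_n}$, hence in particular by $\pirecv{o_i}{v_i}$. Since the $e_j$ are closed and the channels $o_1,\dots,o_n,h$ are freshly bound and pairwise distinct, the administrative reductions of the whole process are exactly interleavings of the administrative reductions of the independent blocks $\translate{e_j}_{o_j}$ with the output-collecting receives taken in order; by the induction hypothesis $\translate{e_i}_{o_i}$ never reaches $\downarrow_{\overline{o_i}}$, so $\pirecv{o_i}{v_i}$ never fires and $\pisend oh$ is never unguarded. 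The main obstacle is precisely this decomposition: one must argue rigorously that running $\translate{[e_1,\dots,e_n]}_o$ administratively cannot do anything beyond running the $\translate{e_j}_{o_j}$ and consuming the collecting prefix in order --- in particular that the cell servers and the length server, which only appear once every $o_j$ has been consumed, can never be reached early --- while tracking which handles have escaped on which of the $o_j$ as the subexpression encodings evolve. \cref{lmm:process_in_context} together with the disjointness of the bound names is what makes this decomposition available, and once it is in place the structural induction closes at once.
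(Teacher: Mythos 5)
The paper does not actually contain a proof of this lemma: it explicitly defers the proofs of \cref{lmm:garbage_collection} and \cref{lmm:to_o_or_not_to_o_part_two} to the cited thesis \cite{p10}, so there is no in-paper argument to compare yours against. Judged on its own terms, your strategy --- contrapositive, structural induction over the non-value shapes, and the observation that an administrative step $\mraw$ can never discharge a $\bullet$ prefix because \runa{Adm} only lifts $\tau$-steps while \runa{NonAdm} is the sole rule for $\bullet P$ and yields $\xrightarrow{\bullet}$ --- is the right mechanism for application, indexing, the conditional and \map, where the emitting output really is $\bullet$-guarded. Your closedness caveat is a genuine and necessary observation (for open $e$ the statement fails on $\translate x_o=\pisend ox$, since variables are not values), and the tuple/array decomposition via \cref{lmm:process_in_context} and the sequential chain of $\pirecv{o_i}{v_i}$ prefixes is the right shape of argument.

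The gap sits exactly where you say ``particular care'' is needed, and it cannot be closed against the translation as printed. The clause for $\translate{\size\ e_1}_o$ contains no $\bullet$ at all: $\pisend on$ is guarded only by $\pirecv{o_1}h.\pirecv{h\cdot\texttt{len}}{n}$, and both of those communications are \runa{Comm} steps, hence $\tau$, hence administrative. So for an array value $v$ one gets $\translate{\size\ v}_o\xRightarrow{\circ}P$ with $P\downarrow_{\overline o}$ even though $\size\ v\notin\mathcal V$ --- your contrapositive, and indeed the lemma as stated, fails on this case. The \iot clause is worse: as displayed, $\pisend oh$ sits unguarded in parallel at top level, so $\translate{\iot\ e_1}_o\downarrow_{\overline o}$ after zero steps; and even under the evidently intended reading where it is guarded by $\pirecv d{}$, that guard is a plain receive, not a $\bullet$, so your ``every path to $\overline o$ crosses a $\bullet$'' invariant still does not hold. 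A correct proof therefore requires either amending these two translation clauses (inserting a $\bullet$ before the final output on $o$, which is also what the operational-correspondence requirement that $\size\ v\mra n$ be matched by $\xRightarrow{\bullet}$ demands) or restricting the lemma; you should make that repair explicit rather than deferring it to a case you have not checked.
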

  
We now construct an administrative operational correspondence whose pairs
consist of \butf programs and their corresponding translations.

\begin{theorem}
  For any \butf program $e$ and fresh name $o$ we have that
  $e \opcor \transtopi{e}_o$.
    \label{lmm:operationel_correspondence}
\end{theorem}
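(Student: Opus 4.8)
The goal is to exhibit an administrative operational correspondence relation $R$ containing every pair $(e, \transtopi{e}_o)$ with $o$ fresh. The plan is to take $R$ to be the relation generated by such pairs, closed under the bisimilarity $\ba$ on the process side and under the reduction relations on both sides; concretely, $R = \{(e', Q) \mid \exists e.\ e \mra^\ast e' \text{ and } Q \ba P' \text{ for some } P' \text{ with } \transtopi{e}_o \xRightarrow{\bullet\ \ast} P'\}$, together with the seed pairs. One then verifies the two clauses of the definition. I will structure the argument as a double induction: an outer structural induction on the \butf expression $e$, and, within the interesting cases, an analysis of which subexpression is currently the active redex (using the call-by-value evaluation order fixed by the semantics in \cref{eq:application_semantics,eq:array_semantics,eq:soac_semantics}).

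For clause (1) — if $e \mra e'$ then $\transtopi{e}_o \xRightarrow{\bullet} \ba P'$ with $(e', P') \in R$ — I would proceed by cases on the reduction rule applied. For a $\beta$-redex $(\lambda x.e_b)\,v \mra e_b\{x \mapsto v\}$, the translation $\transtopi{(\lambda x.e_b)\,v}_o$ first performs administrative steps so that $\transtopi{\lambda x.e_b}_{o_1}$ emits the function handle $f$ and $\transtopi{v}_{o_2}$ emits the value; by \cref{lmm:to_o_or_not_to_o} applied to $v$, these outputs are eventually available. The single $\bullet$-step is the call $\pisend{f}{v,o}$ synchronising with the replicated receiver, uncovering $\transtopi{e_b}_o$ with the argument substituted. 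Then \Cref{lmm:find_replace} is exactly what bridges the gap between $\transtopi{e_b}_o\pisubst{x}{v}$ (or $\nu h.(Q \mid \transtopi{e_b}_o\pisubst{x}{h})$ for composite values) and $\transtopi{e_b\{x \mapsto v\}}_o$ up to $\ba$, so the resulting pair lies in $R$. The conditional, indexing, \size, \iot\ and \map\ cases are analogous: in each, \cref{lmm:to_o_or_not_to_o} supplies the outputs of the evaluated subexpressions, a run of administrative reductions (justified by \cref{lmm:process_in_context} for reasoning inside contexts, and by \cref{lmm:stay_alive} to stay within $\mathcal{U}$) brings the process to the point of its unique $\bullet$-transition, which mirrors the \butf\ rule, and \cref{lmm:garbage_collection} discards the now-inert leftover servers (e.g. the dummy function invocation in \map, or the consumed cell in indexing). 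The congruence case — $e \mra e'$ because a proper subexpression reduces — follows from the outer induction hypothesis together with the fact that each translation places subexpression translations in a static \epic\ context, so a $\bullet$-step of $\transtopi{e_i}_{o_i}$ lifts to a $\bullet$-step of $\transtopi{e}_o$, using that $\ba$ is a congruence (clause 3 of \cref{def:wabb}).

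For clause (2) — if $\transtopi{e}_o \xRightarrow{\bullet} P'$ then $e \mra e'$ with some $Q \ba P'$ and $(e', Q) \in R$ — I would argue that $e$ cannot be a value (by \cref{lmm:to_o_or_not_to_o_part_two}, a value's translation only ever reaches an output barb on $o$ via administrative steps, never a $\bullet$-step), so $e$ has a redex, and by structural analysis of $\transtopi{e}_o$ the \emph{first} available $\bullet$-transition is uniquely determined and corresponds to exactly one \butf\ redex; fire it to get $e'$, and the reasoning of clause (1) identifies $Q = \transtopi{e'}_o$ (up to $\ba$) as the matching process. The key tool here is again \cref{lmm:process_in_context}, to classify a $\bullet$-transition of a composite translated process as belonging to a unique subprocess, plus the observation that the $\bullet$ annotations were placed (per the discussion after \cref{eq:translate_normal_stuff}) precisely one per \butf\ reduction step.

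I expect the main obstacle to be clause (2), specifically establishing that the $\bullet$-transition available to $\transtopi{e}_o$ is essentially unique and that it genuinely corresponds to a \butf\ redex rather than, say, firing "too early" before the operands have been evaluated. This requires a careful invariant — morally, that in any reachable process every $\bullet$-prefix is still guarded by the receives that await the operand outputs, so it can only fire after those administrative communications have occurred — which must be threaded through \cref{lmm:stay_alive} and the structure of $\mathcal{U}$. The secondary difficulty is bookkeeping the $\ba$-closure so that after applying \cref{lmm:find_replace} and \cref{lmm:garbage_collection} the pair really is back in the canonical form $(e', \transtopi{e'}_o)$ up to bisimilarity; making $R$ closed under $\ba$ on the right from the outset, and invoking transitivity of $\ba$ and its being a congruence, is what makes this go through cleanly.
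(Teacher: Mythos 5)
Your plan matches the paper's proof in all essentials: the same canonical relation of programs paired with their translations, a case analysis over the translation constructs handling both directions of the correspondence, with \cref{lmm:to_o_or_not_to_o} and \cref{lmm:to_o_or_not_to_o_part_two} supplying (and certifying) the value outputs, \cref{lmm:find_replace} bridging the substitution gap after the $\bullet$-marked call, and \cref{lmm:garbage_collection} discarding leftover servers. Your explicit structural induction and reduction-closed $R$ are only presentational variants of the paper's appeal to sub-pairs already lying in $R$, so the approach is the same.
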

\begin{proof}
  Let $\mathcal B$ be the set of all \butf programs and let $R$ be the
  relation $R=\{(e,\transtopi{e}_o) \mid e\in \mathcal{B}, o \; \text{fresh}\}$. We show
  that $R$ is an administrative operational correspondence.
    
    We only consider pairs where $e\mra e'$ and where $\transtopi e_o$ contains $\bullet$.
    By extension of this, we are not considering values.

\smallskip\noindent\textbf{Array}\ $e= [e_1,\dots,e_n]$\quad
Let us first consider that $e\mra e'$, and from the \butf semantics, we know that there must exist an $i$ such that $e_i\mra e_i'$.
Here, $\translate{e}_{o}$ contains $\nu o_i.( \translate{e_i}_{o_i})$.
We assume that $(e_i, \translate{e_i}_o)\in R$, and therefore we know that $\translate{e_i}_{o_i}\xRightarrow{\circ}\xrightarrow\bullet Q$ such that $Q\ba\translate{e_i'}_{o_i}$.
Let $P$ be the process $\translate e_o$ with the subprocess $\translate{e_i}_{o_i}$ replaced by $Q$.
In that $\translate{e_i}_{o_i}$ is unguarded in $\translate e_o$, we know that $\translate e_o\xRightarrow{\circ}\xrightarrow\bullet P$, and that $P\ba\translate{e'}_o$.

Vice versa, we show that if $\translate e_o\xRightarrow{\circ}\xrightarrow\bullet P$, then $P\ba\translate{e'}_o$ where $e\mra e'$.
Given that the translation of array does not have $\bullet$, the important reduction must happen inside $\translate{e_i}_{o_i}$.
The translation ensures that $\translate{e_i}_{o_i}$ can only be observed on $o_i$, and therefore the different $\translate{e_i}_{o_i}$ cannot reduce together.
We know that there exists a $j$, such that the important reduction occurs in $\translate{e_j}_{o_j}$, and because $(e_j, \translate{e_j}_o)\in R$, we know that $\translate{e_j}_{o_j}\xRightarrow{\circ}\xrightarrow\bullet Q$ for some $Q$ and $e_j'$ where $Q\ba\translate{e_j'}_{o_j}$ and $e_j\mra e_j'$.
We can then select $e'$ as $e$ where $e_j$ has been replaced by $e_j'$, and then $P\ba\translate{e'}_o$.
This is because $P$ and $\translate{e'}_o$ only differ by administrative reduction (for example in some other $\translate{e_i}_{o_i}$ for $i\neq j$).


\smallskip\noindent\textbf{Tuple}\ $e=(e_1,\dots,e_n)$\quad
Follows from the same argument as \textbf{Array}.

\smallskip\noindent\textbf{Indexing}\ $e=e_1[e_2]$\quad
Operational correspondence requires that if $e\mra e'$ then $\transtopi{e}_o\allowbreak\xRightarrow\bullet\allowbreak P$ such that $P\ba\transtopi{e'}$.
The translation of indexing is defined as seen below.
\begin{equation}
\begin{aligned}
\transtopi{e_1[e_2]} =\ &\nu o_1.\nu o_2.(\transtopi{e_1}_{o_1} \mid \transtopi{e_2}_{o_2}\\
                &\mid\pirecv{o_1}{h}.\pirecv{o_2}{i}.\bullet[i\geq 0]\ \pirecv{h \cdot i}{i, v}.\pisend{o}{v},\pizero)
                \end{aligned}
                \end{equation}
There are three rules for indexing in \butf which we shall call (\textsc{E-Index}, \textsc{E-Index-1}, and \textsc{E-Index-2}).
On the other hand, in \epic, there is the translation for the array ($\transtopi{e_1}_{o_1}$) and the expression to define the desired index ($\transtopi{e_2}_{o_2}$).
The \textsc{E-Index-1/2} rules are used to evaluate sub-expressions $e_1$ and $e_2$.
Because $(e_1,\transtopi{e_1}_0)\in R$, if $e_1[e_2] \mra e_1'[e_2]$ then $\transtopi{e_1}_{o_1} \xRightarrow\bullet\ba\transtopi{e_1'}_{o_1}$.
Then because $\transtopi{e_1}_{o_1}$ is unguarded in $\transtopi{e}_o$, \cref{eq:indexing_stuff_this_is_unique} holds.
\begin{equation}\label{eq:indexing_stuff_this_is_unique}
\begin{aligned}
    \transtopi{e}_o\xRightarrow\bullet\ba\ &\nu{o_1}.\nu{o_2}.(\transtopi{e_1'}_{o_1}\mid\transtopi{e_2}_{o_2}\mid\\
    &\pirecv{o_1}h.\pirecv{o_2}i.\bullet[i\geq 0]\ \pirecv{h\cdot i}v.\pisend ov,\pizero)=\transtopi{e'}_o
    \end{aligned}
\end{equation}
The same has to hold for $e_2$.
These must be assumed to hold if all other cases are operationally correspondent since $e_1$ and $e_2$ are in $R$.

The actual indexing operation (\textsc{E-Index}) is also relevant here.
Here, we know that if $v_1[v_2] \mra v_3$ then we have to have the corresponding operation $\transtopi{v_1[v_2]}_o \xRightarrow{\bullet} \ba\transtopi{v_3}_o$.
Because $e\mra e'$ by \textsc{E-Index}, we know that $e_1$ is an array of length $m$ and $e_2$ is an integer less than $m$.
With the translation $\nu o_1.\nu o_2.(\transtopi{e_1}_{o_1} \mid \transtopi{e_2}_{o_2}
\mid\pirecv{o_1}{h}.\pirecv{o_2}{i}.\bullet[i\geq 0]\ \pirecv{h \cdot i}{v}.\pisend{o}{v})$ we know that they are ready to send on their $o$ after some administrative reductions channels by \cref{lmm:to_o_or_not_to_o}.
The translation thus proceeds to send the handle of the array via $o_1$ and the value is sent on $o_2$. These are administrative reduction and are thus covered by the $\xRightarrow{\circ}$ reductions.

This reduces the program down to $\nu h.(Q_h\mid \bullet\piifthen{i > 0}{\pirecv{h \cdot i}{v}.\pisend{o}{v})}$, where $Q_h$ is the leftovers from the array $\transtopi{e_1}_{o_1}$ and $i$ is the index from $e_2$.
Next, we have the if statement together with $\bullet$, which is defined as an important reduction, and is expressed by the $\xrightarrow\bullet$ arrow: $\dots\xrightarrow\bullet\nu o_1.\nu o_2.(Q_h\mid \pirecv{h \cdot i}{v}.\pisend{o}{v'})$.
Lastly, the value is received internally as $v'$ and returned along the out-channel ($o$).
The still existing array $Q_h$ can now be garbage collected by \cref{lmm:garbage_collection}.

We must also show that if $\transtopi e_o\xRightarrow{\circ}\xrightarrow\bullet P$ then we can find $e'$ such that $P\ba\transtopi{e'}_o$ and $e\mra e'$.
The important reduction can either happen inside either $\transtopi{e_1}_{o_1}$ or $\transtopi{e_2}_{o_2}$ (very similar to array), or before the index check.
In the first case, we can find a $e'$ much like in arrays.
In the latter case, we know that $e_2$ and $i$ are integers that are greater or equal to zero and that some process is listening on $h\cdot i$.
This is only the case if $e_2$ is an array of size larger than $i$.
With this, we know that $e\mra$ by \textsc{E-Index}.


\smallskip\noindent\textbf{Application}\ $e \coloneqq e_1 \  e_2$\quad
There are two cases for which $e \mra e'$. 
One case is when the subexpressions $e_1$ or $e_2$ can reduce.
In that $\transtopi{e_1}_{o_1}$ and $\transtopi{e_2}_{o_2}$ appear unguarded in $\transtopi{e}_o$ and since $\{(e_1,\transtopi{e_1}_o),(e_2,\transtopi{e_2}_o)\} \subseteq R$, we know that $\transtopi{e}_o$ can match $\xRightarrow{\circ}\xrightarrow\bullet\ba$.

The second case is when $e_1 \not\mra \wedge \ e_2 \not\mra$.
Here, \textsc{E-Beta} can take an important reduction.
These are matched by the translation of application.
\begin{align*}
&\nu{o_1}.\nu{o_2}.(\transtopi{e_1}_{o_1} \mid \transtopi{e_2}_{o_2} \mid \pirecv{o_1}{f}.\pirecv{o_2}{x}.\bullet\pisend{f}{x,o}) & \xRightarrow{\circ} \\
&\nu{o_1}.\nu{o_2}.(\nu{f'}.\pisend{o_1}{f'}.({!}\pirecv{f'}{x,r}.\transtopi{e_b}_r) \mid\\
&\quad\nu v.(\pisend{o_2}{v} \mid S) \mid \pirecv{o_1}{f}.\pirecv{o_2}{x}.\bullet\pisend{f}{x,o})  &\xRightarrow{\circ}\\
&\nu v.\nu{f'}.({!} \pirecv{f'}{x,r}.\transtopi{e_b}_r \mid \bullet\pisend{f'
}{v,o}) \mid S &\xrightarrow\bullet\\
&\nu{f'}.({!} \pirecv{f'}{x,r}.\transtopi{e_b}_r) \mid \nu v.(F_o \mid S) &\ba\\
&\nu v.(F_o\mid S)
\end{align*}
First, the expressions are evaluated to values such that they are ready to send on the out-channels.
This results in a guarded replicated function server for $e_1$ and a value ready to be sent for $e_2$.
Afterward, the administrative reductions, in the form of communicating along the out-channels, are performed.

We know that $e_1$ is an abstraction, $\lambda x. e_b$, and therefore $\transtopi{e_1}_{o_1}\allowbreak=\allowbreak\nu f\allowbreak.\allowbreak({!}\allowbreak\pirecv f{x,r}\allowbreak.\allowbreak\transtopi{e_b}_r\allowbreak\mid\allowbreak\pisend{o_1}{f})$.
Also note that $S$ is the process needed to maintain value $v$, i.e. $\transtopi{e_2}_{o_2}\ba\nu a.(S\mid\pisend{o_2}v)$ such that $S$ is only observable on $a$ or $\overline a$.

After the two subprocesses have sent their value on $o$, we can send on $f'$ which is marked by a $\bullet$.
By sending $(v,o)$ an instance of $\transtopi{e_b}_r$ is unguarded, where the name of the return channel is substituted with the name of the out-channel ($o$) together with the value ($v$).


We let $F_o$ denote the function body $\transtopi{e_b}_r$ with the return channel $o$ and the value of $\transtopi{e_2}_{o_2}$, ie. $F_o=\transtopi{e_b}_o\pisubst xv$.
$F_o$ corresponds to the translation of $e'=e_b\{x\coloneqq e_1\}$ by \cref{lmm:find_replace}, and thus $\transtopi e_o\xRightarrow{\circ}\xrightarrow\bullet\ba \transtopi{e'}_o$.

If $\transtopi e_o\xRightarrow{\circ}\xrightarrow\bullet P$ then we must show that $e'$ exists such that $P\ba\transtopi{e'}_o$ and $e\mra e'$.
Like with arrays, if $\xrightarrow\bullet$ happens entirely inside either $\transtopi{e_1}_{o_1}$ or $\transtopi{e_2}_{o_2}$ then, we can select $e'=e_1'\ e_2$ or $e'=e_1\ e_2'$.
If $\xrightarrow\bullet$ happens when sending on $f$, then both $\transtopi{e_1}_{o_1}$ and $\transtopi{e_2}_{o_2}$ can send on $o$ after some administrative reductions.
Therefore by \cref{lmm:to_o_or_not_to_o_part_two} $e_1$ and $e_2$ must be values.
Also $\transtopi{e_1}_{o_1}$ must send the name of a function channel on $o_1$ and therefore we know that $e_1=\lambda x. e_b$ or $e_1=\lambda p. e_b$.
Therefore by \textsc{E-Beta} we have $e \to e'$ where $e'=e_b\{p:=e_2\}$.



\smallskip\noindent\textbf{Conditional}\ $e = \texttt{if}\  e_1 \  \texttt{then} \  e_2 \  \texttt{else} \  e_3 $\quad

The translation for $e$ is as seen below.\[
\nu{o_1}.(\transtopi{e_1}_{o_1}\mid\pirecv{o_1}{v}.[v \neq 0]\ \transtopi{e_2}_{o},\transtopi{e_3}_{o})
\]
We know that any reduction done by $e_1$, can be matched by $\transtopi{e_1}_{o_1}$ since $\transtopi{e_1}_{o_1}$ is unguarded and $(e_1,\transtopi{e_1}_o) \in R$.
Once $e_1$ is done and can send some term ($M$) on $o_1$, there is only one reduction left.
This reduction 
reduces $[M\neq0]\ \transtopi{e_2}_{o},\transtopi{e_3}_{o}$ to either $\transtopi{e_2}_o$ or $\transtopi{e_3}_o$.
Since $e\mra$ and $e_1\not\mra$, $e_1$ must be a value, and thus either \textsc{E-If-True} is matched and \cref{eq:chris_er_sej} or \textsc{E-If-False} is matched and \cref{eq:lars_er_sej}.
\begin{equation}\label{eq:chris_er_sej}
[M\neq0]\ \transtopi{e_2}_{o},\transtopi{e_3}_{o} \xrightarrow\bullet \transtopi{e_2}_o
\end{equation}
\begin{equation}\label{eq:lars_er_sej}
\piifthenelse{M \neq 0}{\transtopi{e_2}_{o}}{\transtopi{e_3}_{o}} \xrightarrow\bullet {\transtopi{e_3}_{o}}
\end{equation}

In the other case when $\transtopi e_o\xRightarrow\bullet P$, we can show that $e'$ exists such that $P\ba e'$ and $e\mra e'$, in much the same way as with name binding.


\smallskip\noindent\textbf{Map}\ $e = \map\ e_1$\quad
First we consider the case where $e\mra e'$.
Like in previous cases, we have can match transitions to the $e_1$ subexpression with the unguarded $\translate{e_1}_{o_1}$ in $\translate e_o$.

This leaves us with the case where $e_1$ is the tuple $((\lambda x.e_b), [v_1,\dots,v_n])$, such that the \map transition can occur.
Then $e'$ becomes the following.
\[
    e' = [e_b\{x\mapsto v_1\}, \dots, e_b\{x\mapsto v_n\}]
\]

We can then see that $\translate{\map\ e_1}_o$ only differs from $\translate{e'}_o$ by some additional administrative reductions.
These happen when the tuple is unpacked, and when each function/substitution is done before the \textit{Cell}.

We follow the same argument to state that $\translate{e}_o\xRightarrow\bullet\ba\translate{e'}_o$.

Additional if $\translate{e}_o\xRightarrow\bullet P'$ then we must be able to find $e'$ such that $P'\ba\translate{e'}_o$.
We know that $P'$ must have taken transition $\bullet\pisend{\mit{done}}{}$, meaning $e_1$ is a tuple value due to $o_1$ and $\mit{args}\cdot\texttt{tup}$ requiring an receive action.
We also know that the tuple must contain an array in the second parameter, and due to the dummy send on \textit{func}, that the first is a function.
Then $e=(\lambda x.e_b, [v_1, \dots, v_n])$ and $e'$ can be set as follows.
\[
  e' = [e_b\{x\mapsto v_1\}, \dots e_b\{x\mapsto v_n\}]
\]

\smallskip\noindent\textbf{Size}\ $e = \texttt{size}\ e_1 $\quad
Follows same argument as the map case.

\smallskip\noindent\textbf{Iota}\ $e = \iot\ e_1 $\quad
Follows same argument as the map case.
\end{proof}




\section{Work and Span Analysis}
To compare the work (W) and span (S) with those of Futhark we carry
out an analysis on the translation of \butf into \epic.
We define work as the actual instructions that happen and span as the depth of parallel instructions.
Our cost model is based on the number of $\bullet$-marked reductions encountered which were placed earlier to facilitate operational correspondence. We find this definition of work useful, but can also see that this definition and the $\bullet$ placements is arbitrary when using it to define work. With this definition, we want to illustrate a way that a translation can be analyzed, despite being two very different paradigms in terms of their executions.
This means that we for example assume that sending and receiving variables is ``free''
($\circ$). In our comparison, work and span costs in \futhark are taken
from the \futhark website\cite{FutharkWorkSpanModel}. The notion of
span is the more interesting of the two, given the potential for
parallelization in \epic.

The first thing to note is that the values in \futhark have a cost and span
of $\mathcal O(1)$, compared to the $\mathcal O(0)$ in the
translation, which could indicate an unacknowledged cost in the
translation.  For arrays and tuples, an improvement in span can be seen
as \epic allows for a full concurrent evaluation of the expressions
inside them.  So instead of span being $S(e_1)+\dots+S(e_n)$ it
becomes $S(\max(e_i))$. The work performed stays the same.

For application, when handling more than one variable the translation
makes use of a tuple input, which then allows for multiple
simultaneous bindings. This can also be done in \futhark and the costs
are the same for both span and work.

\iot involves lower work and span in the translation, as here only the
evaluation of the sub-expression has a cost. However, the difference
in span compared to that of \futhark is only the absence of a single
constant. The cost of \map is the same in both languages, as \futhark
also all handles all the array members in parallel. \textsf{reduce}
can be expressed using \map, \iot, and \texttt{size}, keeping the
asymptotic work and span complexity of ${O}(n)$ and ${O}(log(n))$
respectively that \futhark has.

\begin{table*}
  \begin{small}
\begin{tabular}{|l|l|l|}
\hline
\textbf{Construct} & \textbf{Work} & \textbf{Span} \\ \hline\hline
 $\translate{x}_o$   &       $O(0)$     &  $O(0)$      \\ \hline
 $\translate{v}_o$   &       $O(0)$     &    $O(0)$        \\ \hline
$\translate{\texttt{if}(\dots)}_{o}$  &    $O(1+W(\translate{e_1}_o)+\max(W(\translate{e_2}_o),W(\translate{e_3}_o)))$        &     $O(1+S(\translate{e_1}_o)+\max(S(\translate{e_2}_o),S(\translate{e_3}_o)))$       \\ \hline
   $\translate{\lambda x.e}_o$    &    $O(0)$        &    $O(0)$        \\ \hline
  $\translate{e_1\ e_2}_o$     &    $O(1+W_f(\translate{e_1}_o)+W(\translate{e_2}_o))$        &    $O(1+S_f(\translate{e_1}_o)+S(\translate{e_2}_o))$        \\ \hline
  Array  & $O(\sum_{i=1}^n(W(\translate{e_i}_o)))$           &   $O(S(\max(\translate{e_i}_o)))$          \\ \hline
    Tuple  & $O(\sum_{i=1}^n(W(\translate{e_i}_o)))$           &   $O(S(\max(\translate{e_i}_o)))$          \\ \hline
    $\translate{e_1[e_2]}_o$ &    $O(1+W(\translate{e_1}_o)+W(\translate{e_2}_o))$ &  $O(1+\max(S(\translate{e_1}_o),S(\translate{e_2}_o))$     \\ \hline
    $\translate{\texttt{size} \ e_1}_o$ &    $O(W(\translate{e_1}_o))$       &   $O(S(\translate{e_1}_o))$     \\ \hline
    $\translate{\iot\ e_1}_o$ &    $O(W(\translate{e_1}_o))$       &   $O(S(\translate{e_1}_o))$          \\ \hline
    $\translate{\map\ e_1}_o$ &    $O(W_a(\translate{e_1}_o)+W_f(\translate{e_1}_o)*n)$       &  $O(S_a(\translate{e_1}_o)+S_f(\translate{e_1}_o))$ \\ \hline
\end{tabular}
\end{small}
\caption{The different complexities of translated expressions, measured by the number of $\bullet$ reductions.}
\end{table*}

\section{Conclusion}

In this paper we have presented the \butf language, a
$\lambda$-calculus with parallel arrays inspired by the \futhark
programming language, and we show a translation of \butf into \epic, a
variant of \pic that uses polyadic communication and broadcast.

Our translation extends the translation from the $\lambda$-calculus to
the \pic due to Milner et al. with the notion of arrays and
involves defining the usual operations on arrays in a process calculus
setting. Our proof of correctness uses a coinductively defined notion
of operational correspondence.
While we proof that the translation is correct in regards to operational correspondence, we do not show that the translation is fully abstract, or that translated programs diverge.

We present a cost model for our version of the \pic in the form of a
classification of reductions -- they can be either important or
administrative. A cost analysis was performed for the translation to
\epic, and its results were compared with the cost for \futhark's
language constructs.  This comparison shows that the map and reduce
operations in \futhark are similar to the fully parallel ones shown here.

\epic uses broadcasting; while this allows us to have a concise
approach that has no counterpart in the $\lambda$-calculus or general
purpose computer instructions means that it might not represent actual
possible performance in the computers which \futhark targets.  Having
broadcast in \epic makes it rather simple to implement array
indexing. It would be interesting to consider an array structure
without the use of broadcast. Here, one must take into account the
result due to Ene and Muntean \cite{DBLP:conf/fct/EneM99} that
broadcast communication is more expressive than point-to-point
communication.

Our translation is not typed; the next step will be to introduce a type
system in \butf and \epic, and extend the translation to also
translate types. Binary session types \cite{gay2005subtyping} would be
a natural candidate to ensure that the channels in the translation
follow a particular protocol.

Furthermore, it is of interest to validate if the translation can be done in the standard \pic without broadcast and composite names.
This would make it possible to relate the translation with other work in the \pic domain.
Broadcasting and composed names as primitives in \epic might also be unrealistic, when considering \epic as an abstraction for real world hardware.

\bibliographystyle{eptcs}

\end{document}